\title{Approximate Maximum Halfspace Discrepancy}
\author{Michael Matheny}{Amazon}{mmath@cs.utah.edu}{}{}
\author{Jeff M. Phillips}{University of Utah}{jeffp@cs.utah.edu}{}{Thanks to supported by NSF CCF-1350888, IIS-1251019, ACI-1443046, CNS-1514520, and CNS-1564287}
\authorrunning{M. Matheny and J.\,M. Phillips}
\keywords{range spaces, halfspaces, scan statistics, fine-grained complexity}
\newcommand{\E}{\ensuremath{\mathsf{E}}}
\newcommand{\eps}{\varepsilon}
\newcommand{\poly}{\ensuremath{\mathsf{poly}}}
\renewcommand{\c}[1]{\ensuremath{\mathcal{#1}}}
\newcommand{\R}{\mathbb{R}}
\newcommand{\omt}[1]{}
\newcommand{\etal}{\emph{et al.}\xspace}
\renewcommand{\c}[1]{\ensuremath{\mathcal{#1}}}
\newcommand{\eMH}{\textsc{$\eps$-Max-Halfspace}\xspace}
\newcommand{\MH}{\textsc{Max-Halfspace}\xspace}
\newcommand{\PC}{\textsc{Point-Covering}\xspace}
\newcommand{\LC}{\textsc{Line-Covering}\xspace}
\newcommand{\MWKC}{\textsc{Max-Weight-$K$-Clique}\xspace}
\newcommand{\MWTC}{\textsc{Max-Weight-$3$-Clique}\xspace}
\newcommand{\MWP}{\textsc{Max-Weight-Point}\xspace}
\newcommand{\TSUM}{\textsc{3Sum}\xspace}
\newcommand{\APSP}{\textsc{APSP}\xspace}
\newcommand{\NT}{\textsc{Negative-Triangle}\xspace}
\begin{document}

\maketitle

\begin{abstract}
Consider the geometric range space $(X, \c{H}_d)$ where $X \subset \mathbb{R}^d$ and $\c{H}_d$ is the set of ranges defined by $d$-dimensional halfspaces. In this setting we consider that $X$ is the disjoint union of a red and blue set. For each halfspace $h \in \c{H}_d$ define a function $\Phi(h)$ that measures the ``difference'' between the fraction of red and fraction of blue points which fall in the range $h$.   
In this context the maximum discrepancy problem is to find the $h^* = \arg \max_{h \in (X, \c{H}_d)} \Phi(h)$. 
We aim to instead find an $\hat{h}$ such that $\Phi(h^*) - \Phi(\hat{h}) \le \eps$. 
This is the central problem in linear classification for machine learning, in spatial scan statistics for spatial anomaly detection, and shows up in many other areas.  
We provide a solution for this problem in $O(|X| +  (1/\eps^d) \log^4 (1/\eps))$ time, which improves polynomially over the previous best solutions.   For $d=2$ we show that this is nearly tight through conditional lower bounds.  For different classes of $\Phi$ we can either provide a $\Omega(|X|^{3/2 - o(1)})$ time lower bound for the exact solution with a reduction to \textsc{APSP}, or an $\Omega(|X| + 1/\eps^{2-o(1)})$ lower bound for the approximate solution with a reduction to \textsc{3Sum}.  

A key technical result is a $\eps$-approximate halfspace range counting data structure of size $O(1/\eps^d)$ with $O(\log (1/\eps))$ query time, which we can build in $O(|X| + (1/\eps^d) \log^4 (1/\eps))$ time.  
\end{abstract}


\section{Introduction}

Let $X$ be a set of $m$ points in $\R^d$ for constant $d$ where $X$ can either be the union of a red, $R$, and blue set, $B$, of points $X = R \cup B$ (possibly not disjoint) or a set of weighted points where each point has weight $w(x)$ for $x \in X$.  Now let $(X, \c{H}_d)$ be the associated range space of all subsets of $X$ defined by intersection with a halfspace; the halfspaces are not restricted to go through the origin in this paper.

We are interested in finding the halfspace $h^*$ and value $\Phi^*$ that maximizes a function $\Phi_X(h) : \c{H}_d \rightarrow \R$ for some class of functions $\Phi$.  We characterize them by reframing it as a function of $\mu_R$ and $\mu_B$ so $\Phi_X(h) = \phi(\mu_R(h), \mu_B(h))$,
where $\mu_R(h) = |R \cap h|/|R|$ and $\mu_B(h) = |B \cap h|/|B|$ are the fraction of red or blue points, respectively, in the range $h$. 
In particular, we only consider functions $\Phi_X(h)$ which can be calculated in $O(1)$ time  from $\mu_R(h)$ and $\mu_B(h)$ as $\phi(\mu_R(h), \mu_B(h))$ (e.g., $\phi(\mu_R, \mu_B) = |\mu_R - \mu_B|$).  
Given such a fixed $\phi$, or one from a class, we state the two main problems: exact and $\eps$-additive error.  

\begin{itemize}
\item  \textbf{Problem} \MH: \emph{From a given set $X = R \cup B \subset \R^d$ points where $|X|= m$ and a Lipshitz constant function $\Phi_X(h) : 2^X \rightarrow \R$, find $h^*=\arg \max_{h \in \c{H}_d} \Phi_X(h)$.} 


\item  \textbf{Problem} \eMH: \emph{From a given set $X = R \cup B \subset \R^d$ points where $|X|= m$ and a Lipshitz constant function $\Phi_X(h) : 2^X \rightarrow \R$ where $h^*=\arg \max_{h \in \c{H}_d} \Phi_X(h)$, find $\hat{h} \in \c{H}_d$ such that 
$\Phi_X(h^*) - \Phi_X(\hat{h}) \le \eps $.} 
\end{itemize}

\begin{figure}[t]
\vspace{-2mm}
\includegraphics[width=0.325\linewidth]{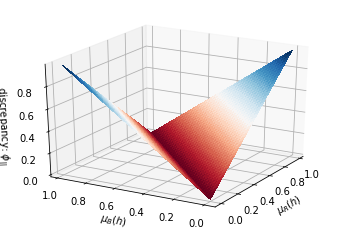}
\phantom{}
\includegraphics[width=0.325\linewidth]{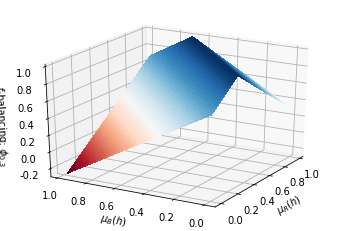}
\phantom{}
\includegraphics[width=0.325\linewidth]{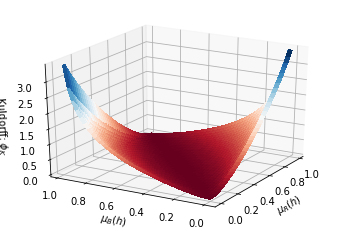}
\vspace{-2mm}
\caption{Plots of common $\phi$ functions.  From left to right: $\phi_{||}$, $\phi_f$ (for $f=0.3$), and $\phi_K$.}
\label{fig:phi}
\end{figure}

We typically write $\Phi_X$ as $\Phi$ when the set $X$ is clear.  Our algorithms will explicitly compute $\mu_R(h)$ (and $\mu_B(h)$ separately) for each $h$ evaluated, and so work with all functions $\phi$.   Notable useful examples of $\phi$, shown in Figure \ref{fig:phi}, include:
\begin{itemize}
\item \textbf{discrepancy:} $\phi_{||}(h) = |\mu_R(h) - \mu_B(h)|$.  \\
This measures the maximum disagreement between the proportions of the two sets $B$ and $R$: the central task of building a linear classifier (on training data) in machine learning is often formulated to maximize precisely this function~\cite{LM96,NS13,diakonikolas2019distribution,diakonikolas2020learning}.  Given a coloring of $X$ into $R$ and $B$, this also measures the discrepancy of that coloring~\cite{Mat99,Cha00}.  
\item \textbf{$f$-balancing:} $\phi_f(h) = 1 - | (\mu_R(h) - \mu_B(h)) - f|$, for a fraction $f \in (0,1)$.  \\
This scores how well a halfspaces strikes a balance of $f$ between the two sets.  By minimizing this function (or maximizing $1-\phi_f(h)$), the goal is to find a range $h$ that exhibits an imbalance between the sets of $f$.  Maximizing this function (say with $f=1/2$) is the problem of finding good ham sandwich cuts~\cite{LMS94}.  
\item \textbf{Kulldorff:} $\phi_K(h) = \mu_R(h) \log \frac{\mu_R(h)}{\mu_B(h)}  + (1 - \mu_R(h)) \log \frac{1- \mu_R(h)}{1 - \mu_B(h)}$.  \\
This is the Kulldorff discrepancy function~\cite{Kul97} which arises in spatial scan statistics~\cite{MP18a,SSSS,Kul97,Kul7.0,HKG07,NM04,APV06,AMPVZ06}, for detecting spatial anomalies.  This function specifically is derived as the log-likelihood ratio test under a Poisson model, but other similar convex, non-linear functions arise naturally from other models~\cite{Kul7.0,AMPVZ06}.  In this setting the most common range shape model is a disk, and the best algorithms~\cite{DE93,DEM96,MP18a} operate by lifting to one-dimension higher where the ranges correspond to halfspaces.  
\end{itemize}

All of these $\phi$  functions are Lipschitz continuous over $\mu_B$ and $\mu_R$, and thus this extends to a combinatorial notion of Lipschitz over the associated $\Phi$ with respect to the combinatorial range $h$: that is, 
$|\Phi_X(h) - \Phi_X(h')| \le c (||h \cap R| - |h' \cap R||/|R| + ||h \cap B| - |h' \cap B||/|B|)$ 
for constant $c$.  For $\phi_{||}$ and $\phi_f$, $c=1$, and for $\phi_K$ it is bounded in a reasonable range of $\mu_B,\mu_R$~\cite{AMPVZ06,MP18b}.  
This means that approximating each of $\mu_B(h)$ and $\mu_R(h)$ up to additive error translates into at most additive error in $\Phi_X$.  

We also consider the hardness of these problems, and for this we will restrict the classes of functions $\phi$ considered in these problems.  
We will show the largest lower bound on \emph{concave} functions $\phi$ (like $\phi_f$), and this construction will apply to a Lipschitz functions (again like $\phi_f$).
However, the class of convex functions (like $\phi_{||}$ and $\phi_K$) are more prevalent, and we present a smaller lower bound, but which applies to convex Lipschitz functions (including $\phi_{||}$ and $\phi_K$).

We also note that for all of the above functions and linked challenges, the $\eps$-approximate versions are just as relevant and common as the exact ones.  
For instance in machine learning, typically $\eps$-additive error is the baseline, on assumptions that the input $X$ is drawn from a fixed but unknown distribution~\cite{VC71,LLS01}.  
Spatial scan statistics are applicable to data sets containing thousands to 100s of millions of spatial data points, such as census data, disease incidents, geo-located social media posts.  The exact algorithms are polynomial in $m$ and therefore can have massive runtimes, which become infeasible in the full data sets.  Usually the exact algorithms are not even attempted and the set of ranges considered are defined using some heuristic such as disks with centerpoints of a grid or centered at data points.  
On the other hand the $\eps$-additive error versions are scalable with runtimes of $O(m + \poly(\frac{1}{\eps}))$, depending only on the accuracy of the solution not the scale of the data.

\subparagraph*{Our Results.}
We connect this problem to the approximate range counting problem for $d \ge 2$ in the additive error model by designing a data structure of size $O(\frac{1}{\eps^d})$ that can be constructed in time $O(m + \frac{1}{\eps^d} \log^4 \frac{1}{\eps})$ with constant probability and supports range counting queries in time $O(\log \frac{1}{\eps})$. 
This structure implies a halfspace scanning algorithm that runs in time $O(m + \frac{1}{\eps^d} \log^4 \frac{1}{\eps})$. The data structure is closely related to cuttings, but we do not need the set of crossing lines, but only an approximation of their total count.  

In the other direction, we show instances where $\phi$ is linear that are as hard as $\Omega(m^{3/2 - o(1)})$ on the exact problem, reduced from \APSP, and instances where $\phi$ is concave that are as hard as $\Omega(m + \frac{1}{\eps^{2 - o(1)}})$, reduced from \TSUM. 
This implies (conditionally) that this class of scanning algorithms requires $\Omega(m + \frac{1}{\eps^{2 - o(1)}})$ time, and any further algorithmic improvements (beyond $\mathrm{polylog}(1/\eps))$ factors) would either require specific and new assumptions on $\phi$ or improvement on a classic hard problem.

\subparagraph{Relation to Prior Work.}
The best prior algorithms for \eMH required $O(m + (1/\eps)^{d + 1/3} \log^{2/3}(1/\eps))$ time~\cite{MP18b} and the best exact algorithm requires $O(m^d)$ time~\cite{DE93,DEM96}.  
Conditioned on \TSUM, without allowing restrictions to $\phi$ beyond linearity, our lower bound shows the prior exact algorithms are shown tight for $d=2$ by setting $m=1/\eps$.  

For the related problem of the range space defined by axis-aligned rectangles~\cite{wei2018tight}, the story for instance is more complicated with respect to $\phi$. 
For linear $\phi$ the exact problem can be solved in $O(m^2)$ time~\cite{Barbay2014} which is tight~\cite{Backurs16} assuming no subcubic algorithm for \MWTC (and hence \APSP).  
The $\eps$-approximate version can be solved in $O(m + \frac{1}{\eps^{2d-2}} + \frac{1}{\eps^2} \log \log \frac{1}{\eps})$ time with constant probability~\cite{MP18b}, and by the result of Backurs \etal~\cite{Backurs16} this cannot be improved beyond $\Omega(m + 1/\eps^2)$ in $\R^2$ under the same assumptions~\cite{MP18b}.  
However, for general functions $\phi$ the best known runtimes increase to $O(m^4)$ and $O(m + 1/\eps^4)$ for the exact and approximate versions, respectively~\cite{MP18b}.  Although for a big class of convex functions (like $\phi_K)$ can be reduced back to $O(m + 1/\eps^{2.5})$ in the approximate case~\cite{MP18b}.  
Thus, this paper shows the situation appears significantly simpler for the halfspaces case, and we provide a new connection to a (usually) separate~\cite{VWW18,Backurs16,VW18} class of conditional hardness problems through \TSUM.

Our approximate range counting structure is also new and may be of independent interest,  allowing very fast halfspace queries (in $O(\log 1/\eps)$ time) in moderate dimensions $d$ where $1/\eps^d$ may not be too large.  For example, $3d$ disk queries maps to the $d=4$ setting, and with $\eps = 0.01$ (1\% error) then $1/\eps^d = 100$ million -- which can fit in the memory of most modern systems, and allow for very fast queries.  
Similar structures are possible for the exact range counting paradigm~\cite{BS95,Cha00}, and these could be adapted to the approximate setting after constructing an appropriate $\eps$-sample of $X$~\cite{VC71,LLS01,MP18a}.  But these would instead use higher $\tilde{O}(1/\eps^{d+1})$ preprocessing time (where $\tilde{O}(z)$ hides $\mathsf{polylog}(z)$ terms).  
Most effort in approximate range counting has come in the low-space regime.  For instance with \emph{relative} $(1+\eps)$ error, one can build a data structure of expected size $O(\mathrm{poly}(1/\eps) \cdot m)$ that answers approximate range counting queries in $O(\mathrm{poly}(1/\eps) \cdot \log m)$ time in $\R^3$~\cite{AC09}.  
However for $d > 3$ with linear space, the query time becomes polynomial in $m$ at $\tilde O(m^{1-1/\lfloor d/2 \rfloor})$~\cite{AH08,RS-survey17,Sal17}.

A related line of work on robust halfspace learning under specific noise models in high dimensions has witnessed significant progress in the last few years~\cite{diakonikolas2019distribution,diakonikolas2020learning,diakonikolas2021agnostic,chen2020classification}.  
These models and algorithms also start with a sufficiently large iid sample (of size roughly $1/\eps^2$), and ultimately achieve a result with additive $\eps$ error from the opt.  However, they assume a specific noise model, but using this achieve runtimes polynomial in $d$, whereas our results grow exponentially with $d$ but do not require any assumptions on the noise model.   
A recent focus is on Massert noise, where given a perfect classifier, each point has its sign flipped independently with an assigned probability at most $\eta < 1/2$.  Only recently~\cite{chen2020classification}, in this Massert noise model it was achieved a proper learned function (a halfspace) in time polynomial in $1/\eps$ and in $d$.  Recent previous work was not proper (the learned function may not be a halfspace)~\cite{diakonikolas2019distribution,diakonikolas2020learning}, or takes time exponential in $1/\eps$~\cite{diakonikolas2021agnostic}.  Indeed, under Gaussian distributed data, the proper halfspace learning requires $d^{\Omega(1/\eps)}$ time~\cite{diakonikolas2020near}, under the statistical query model~\cite{reyzin2020statistical}.

\section{Background and Notation}
\label{sec:review}

\subparagraph*{Useful Sampling Properties.}
An \emph{$\eps$-sample}~\cite{HPbook,VC71} $S \subset X$ of a range space $(X,\c{R})$ preserves the density for all ranges as 
$
\max_{A \in \c{R}} |\frac{|X \cap A|}{|X|} - \frac{|S \cap A|}{|S|}| \leq \eps.  
$
An \emph{$\eps$-net} $N \subset X$ of a range space $(X,\c{R})$ hits large ranges, specifically for all ranges $A \in \c{R}$ such that $|X \cap A| \geq \eps |X|$ we guarantee that $N \cap A \neq \emptyset$.  
For halfspaces, 
a random sample $S \subset X$ of size $O(\frac{1}{\eps^2} (d + \log \frac{1}{\delta})$ is an $\eps$-sample with probability at least $1-\delta$~\cite{VC71,LLS01}, and a
 random sample $N \subset X$ of size $O(\frac{d}{\eps} \log \frac{1}{\eps \delta})$ is an $\eps$-net~\cite{HW87} with probability at least $1-\delta$.  
\subparagraph*{Enumeration.}
Given a set of points $X \subset \mathbb{R}^d$, Sauer's Lemma~\cite{Sau72} shows that there are at most $O(|X|^d)$ combinatorially distinct ranges, where each range defined by a halfspace contains the same subset of points.  
We can always take a halfspace and rotate it until it intersects at most $d$ boundary points without changing the set of points contained inside. This observation immediately implies a simple algorithm for scanning the point set; we can just enumerate all 
subsets of at most $d$ points, compute a halfspace that goes through these points, and count the red and blue points lying underneath to evaluate $\Phi$.

Our work builds upon this simple algorithm by dividing it into two steps and optimizing both. We first define a set of prospective ranges $\hat{\c{H}}_d$ to scan and then secondly compute the function $\Phi$ on each region. 
Matheny \etal~\cite{MP18b} showed that a simple random sample $X_0 \subset X$ of size $O(\frac{1}{\eps})$ (for constant $d$) induces a small range space $(X_0, \c{H}_d)$.  Each range (a subset of $X_0$) maps to a canonical geometric halfspace $h_0$, and this geometric halfspace in turn induces a range $h_0 \cap X$, an element of $(X,\c{H}_d)$.  We refer to this subset of $(X,\c{H}_d)$ as $(X,\hat{\c{H}}_d)$, it is of size $O(1/\eps^d)$. Now each range $h \cap X$ in $(X,\c{H}_d)$ there is a range $\hat h \cap X \in (X,\hat{\c{H}}_d)$ such that the symmetric difference between them is at most $\eps|X|$.  
So if every range in $(X,\hat{\c{H}}_d)$ needs to be explicitly checked, and $f(1/\eps)$ is the time to compute $\Phi$, this would imply a $(1/\eps^d)f(1/\eps)$ lower bound.  Note that $f(1/\eps)$ may take super-constant time because we may need to construct the (approximate) $\mu_R(h)$ and $\mu_B(h)$ values.

\subparagraph*{Cuttings.}
Cuttings are a useful tool to act as a divide and conquer step for geometric algorithm design. Given $\mathbb{R}^d$, a set of halfspaces $\c{H}_d$ of size $m$, and some parameter $r$ a $\frac{1}{r}$-\textit{cutting} is a partition of $\mathbb{R}^d$ into a disjoint set of constant complexity cells where each cell is crossed by at most
$m / r$ halfspaces. The set of halfspaces crossing a cell in the cutting is referred to as the conflict list of the cell. It is well established that the number of partitions is of size $O(r^d)$ and the partitioning can be constructed in $O(r^{d - 1} m)$ time if the conflict lists are needed~\cite{Mat91}. If the crossing information is not needed then faster algorithms can be used. For instance, the arrangement of a $\frac{1}{r}$-net over $\c{H}_d$ defines a partitioning of $\mathbb{R}^d$ into disjoint cells where each cell is crossed
by at most $m / r$ halfspaces, and since a simple random sample can be used to generate the net, a cutting of size $O(r^d\log^d r)$ can be computed in $O(m + r^d\log^d r)$ time with constant probability.

When a cutting is restricted to a single cell there are better bounds on the size of the partitioning. We will need this better bound for our proof and we restate it here. 
\begin{theorem}[\cite{Cha93, BS95}]
\label{thm:cutting}
Denote the vertices corresponding to $d$-way intersections of $\c{H}_d$ as $\c{A}(\c{H}_d)$.   A $\frac{1}{r}$-cutting of a cell $\Delta$ containing $|\c{A}(\c{H}_d) \cap \Delta| = \eta$ vertices can be constructed with 
$\displaystyle{O(\eta \left (\frac{r}{m} \right )^d + r^{d - 1})}$  cells.
\end{theorem}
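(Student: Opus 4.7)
The plan is to construct the cutting inside $\Delta$ by the standard hierarchical random-sampling approach of Chazelle, tuned so that the bound depends on the local vertex count $\eta$ rather than the global $O(m^d)$. First I would draw a random sample $R \subset \c{H}_d$ of size $\Theta(r)$ and form a canonical (bottom-vertex or vertical) triangulation of the restricted arrangement $\c{A}(R) \cap \Delta$; the resulting simplices inside $\Delta$ are the candidate cells of the cutting, so what remains is to bound both their number and the number of halfspaces of $\c{H}_d$ crossing each.

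For the cell count I would treat separately simplices whose defining vertices lie strictly inside $\Delta$ and simplices that touch $\partial \Delta$. By the standard Clarkson--Shor argument, each of the $\eta$ vertices of $\c{A}(\c{H}_d) \cap \Delta$ survives in $\c{A}(R)$ with probability $\binom{r}{d}/\binom{m}{d} = \Theta((r/m)^d)$, so the expected number of surviving interior vertices, and hence of simplices anchored at them, is $O(\eta (r/m)^d)$. The simplices adjacent to $\partial \Delta$ are controlled by a zone-style argument applied separately to each facet of $\Delta$, each of which is a $(d{-}1)$-dimensional arrangement of at most $r$ sampled cross-sections; summing over the $O(1)$ facets of $\Delta$ gives the additive $O(r^{d-1})$ term.

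To upgrade this decomposition into a genuine $(1/r)$-cutting, where every cell is crossed by at most $m/r$ halfspaces, I would invoke the Clarkson--Shor exponential moment bound: the number of cells crossed by more than $t \cdot m/r$ halfspaces decays geometrically in $t$, so overcrowded cells can be recursively refined by the same construction applied locally (with the conflict list of each cell playing the role of $\c{H}_d$ and the cell playing the role of $\Delta$, so its own interior vertex count feeds the recursion). The main obstacle is precisely this recursive step: a naive refinement would pay a fresh $r^{d-1}$ boundary penalty per overcrowded parent and blow the bound up. The clean way around it is to inflate the sample by only a constant factor per level (so the recursion has depth $O(\log r)$) and to charge each sub-cutting's boundary cost either against the Clarkson--Shor expected vertex count it captures or against a facet of its parent; the geometric decay in the moment bound then closes the recursion at the claimed $O(\eta (r/m)^d + r^{d-1})$.
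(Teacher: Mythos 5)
This statement is cited by the paper directly from \cite{Cha93, BS95} and is not proved there, so there is no in-paper proof to compare against. On its own terms, your sketch follows the standard Chazelle-style hierarchical random-sampling construction of cuttings, which is indeed the approach in the cited references: the Clarkson--Shor probability $\Theta((r/m)^d)$ for a vertex to survive into the sampled arrangement gives the vertex-sensitive term, and the zone argument on the $O(1)$ facets of $\Delta$ gives the additive $r^{d-1}$ boundary term, so the outline of the initial decomposition is right.

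The place you should be more careful is exactly the one you flag: closing the recursion so that refining overcrowded cells does not re-inflate the $r^{d-1}$ boundary term. Your proposed fix (refine by a constant factor $\rho$ per level, $O(\log_\rho r)$ levels) is the correct idea, but it needs a concrete accounting: at level $i$ the boundary contribution per refined cell is only $O(\rho^{d-1}) = O(1)$, so the total boundary cost is $O(\rho^{d-1} N_{i-1})$ where $N_{i-1}$ is the number of cells at level $i-1$; unrolling this together with the vertex contributions, one finds $N_L = O\bigl(\eta (r/m)^d + r^{d-1}\bigr)$ because the vertex term at level $i$ already scales as $\eta(\rho^{i}/m)^d$ and the geometric factor $\rho^{(L-i)(d-1)}$ leaves a convergent sum. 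Stating this telescoping explicitly, rather than appealing to ``charge against the vertex count or a parent facet,'' is what turns your sketch into a proof; as written, the charging is not yet a well-defined scheme. Also, the single-shot sample of size $\Theta(r)$ only gives the cutting guarantee in expectation; you should note that the exponential Clarkson--Shor moment bound is exactly what lets you bound the total work of the refinement, not merely identify which cells to refine.
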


\section{Approximate Halfspace Range Counting and the Upper Bound}
\label{sec:upper}

In this section, instead of operating on $R \cup B = X \in \R^d$, as is common for halfspace range searching, we work on the set $H$ of dual halfspaces in $\R^d$.  In this setting a query halfspace $h \in \c{H}_d$ in the dual representation is $q_h \in \R^d$, and the desired quantity is the number of halfspaces in $H$ below $q_h$.   We apply the construction separately for $R$ and $B$, so the halfspaces $H$ may be weighted, with positive weights.  

We will construct $L$ decompositions of $\R^d$ into disjoint trapezoidal cells: $\mathbf{\Delta}_0, \mathbf{\Delta}_1, \ldots, \mathbf{\Delta}_L$.  
For each level of cells $\mathbf{\Delta}_i$ any cell $\Delta \in \mathbf{\Delta}_i$ has a set of children cells $S_{\Delta}$. The next level of cells $\mathbf{\Delta}_{i + 1}$ is the disjoint union of all the $S_{\Delta}$ children cells of $\Delta \in \mathbf{\Delta}_i$; that is $\mathbf{\Delta}_{i + 1} = \dot{\bigcup}_{\Delta \in \mathbf{\Delta}_i} S_{\Delta}$. Initially $\mathbf{\Delta}_0 = \mathbb{R}^d$ and therefore contains the entire domain and a corresponding sample of this initial cell will be denoted as $\hat{H}$. We will define $\Delta \sqcap H$ to be the set of halfspaces in $H$ that lie completely underneath $\Delta$ and $\Delta \cap H$ is the set of halfspaces in $H$ that cross $\Delta$. 

Importantly, we maintain an estimate of the weight of each cell $m_i(\Delta)$, as we recursively build the decomposition.  It depends on a sufficiently large constant $r$.  
For each cell $\Delta' \in \mathbf{\Delta}_{i + 1}$ where $\Delta' \in S_{\Delta}$ we take a sample $H_{\Delta'} \subset (H_\Delta \cap \Delta')$  of size $\frac{|H_\Delta \cap \Delta'|}{r}$ with replacement. 
The value $\hat m_{i+1}(\Delta')$ estimates the number of halfplanes lying below it, and is defined recursively as $\hat{m}_{i+1}(\Delta') = r^{i+1} |H_{\Delta} \sqcap \Delta'| + \hat{m}_{i}(\Delta)$; with $\hat m_0(\Delta) = 0$. 
If a cell $\Delta$ has a small number of lines crossing it, specifically if $|H_{\Delta}| \le \log \frac{1}{\eps}$ then the recursion terminates.  
Otherwise, we \emph{split} the cell, and create a $1/t_\Delta$-cutting of each $(\Delta, H_\Delta)$, and let $S_\Delta$ be its cells; the value $t_\Delta = \max(\frac{|H_{\Delta}| r^{2i + 1} }{|\hat{H}|}, 1)$ is $1$ if $|H_\Delta|$ is small, otherwise it is at most $r$.  
We recurse on each cell $\Delta'$ in each $S_\Delta$ until each is sufficiently small, which requires $L = O(\log \frac{1}{\eps})$ levels.  
See Algorithm \ref{alg:refine} for details.

\begin{algorithm}
	\caption{SampleCut$(\mathbf{\Delta}_0, \hat H)$}
	\label{alg:refine}
	\begin{algorithmic} 
		\FOR{levels $i = [0, L]$}
			\FOR{each cell in that level $\Delta \in \mathbf{\Delta}_i$ with $|H_{\Delta}| > \log \frac{1}{\eps}$}
					\STATE Set $t_{\Delta} = \max(\frac{|H_{\Delta}| r^{2i + 1} }{|\hat{H}|}, 1)$  
					\STATE Build $S_{\Delta}$, the cells of a $\frac{1}{t_{\Delta}}$-cutting on $(\Delta, H_{\Delta})$.  
					\FOR{each child cell $\Delta' \in S_{\Delta}$}
							\STATE $\hat{m}_{i+1}(\Delta') = r^{i+1} |H_{\Delta} \sqcap \Delta'| + \hat{m}_{i}(\Delta)$
							\STATE Sample $H_{\Delta'} \subset H_{\Delta}$ where $|H_{\Delta'}| = |\Delta' \cap H_{\Delta}| / r$
					\ENDFOR
			\ENDFOR
		\ENDFOR
	\end{algorithmic}
\end{algorithm}

\subparagraph{Complexity analysis.}
As cells are subsampled and split the number of lines and vertices lying inside of a cell drops off quickly with the level.  
\begin{lemma}
	\label{lem:cellsize1}
	A cell $\Delta \in \mathbf{\Delta}_{i}$ with sample $H_{\Delta}$ is of size $|H_{\Delta}| \le |\hat{H}| / r^{2i}$.
\end{lemma}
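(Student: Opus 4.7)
The plan is to prove this by induction on the level $i$. The base case $i=0$ is immediate: the root cell $\mathbf{\Delta}_0 = \mathbb{R}^d$ has associated sample $\hat{H}$ by definition, so $|H_{\mathbf{\Delta}_0}| = |\hat{H}| = |\hat{H}|/r^0$, matching the claimed bound with equality.

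For the inductive step, assume the bound $|H_\Delta| \le |\hat{H}|/r^{2i}$ holds for every cell $\Delta \in \mathbf{\Delta}_i$, and consider any child $\Delta' \in S_\Delta$ at level $i+1$. Two ingredients drive the argument: (a) because $S_\Delta$ is a $\tfrac{1}{t_\Delta}$-cutting of $(\Delta, H_\Delta)$, the number of halfspaces crossing $\Delta'$ is $|\Delta' \cap H_\Delta| \le |H_\Delta|/t_\Delta$; and (b) the algorithm then subsamples at rate $1/r$, so $|H_{\Delta'}| = |\Delta' \cap H_\Delta|/r \le |H_\Delta|/(r\, t_\Delta)$. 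The key observation is that $t_\Delta = \max(|H_\Delta| r^{2i+1}/|\hat{H}|,\, 1)$ was chosen precisely to produce the telescoping factor needed.

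I would split into the two branches of the max. When $t_\Delta = |H_\Delta| r^{2i+1}/|\hat{H}| \ge 1$, substituting gives
\[
|H_{\Delta'}| \;\le\; \frac{|H_\Delta|}{r\, t_\Delta} \;=\; \frac{|H_\Delta| \cdot |\hat{H}|}{r \cdot |H_\Delta|\, r^{2i+1}} \;=\; \frac{|\hat{H}|}{r^{2(i+1)}},
\]
which is the desired bound. When instead $t_\Delta = 1$, the cutting is trivial (so $\Delta' = \Delta$ and $|\Delta' \cap H_\Delta| = |H_\Delta|$), and the defining inequality $|H_\Delta| r^{2i+1}/|\hat{H}| \le 1$ rearranges to $|H_\Delta| \le |\hat{H}|/r^{2i+1}$. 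Then $|H_{\Delta'}| = |H_\Delta|/r \le |\hat{H}|/r^{2i+2}$, again matching the target.

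I do not expect a real obstacle here: both ingredients (the definitional crossing bound of a $\tfrac{1}{t}$-cutting and the explicit $1/r$ subsampling step) are stated verbatim in Algorithm~\ref{alg:refine}, and the exponent bookkeeping is designed to align. The one subtlety worth flagging in the write-up is that $t_\Delta$ need not equal the argument of the $\max$, so a clean proof must handle the $t_\Delta = 1$ case explicitly rather than blindly cancel; the short calculation above shows that branch is in fact strictly slack, so the induction goes through in either regime.
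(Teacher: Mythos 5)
Your proof is correct and follows essentially the same route as the paper: substitute the definition of $t_\Delta$ into the $\tfrac{1}{t_\Delta}$-cutting bound together with the $1/r$ subsampling, and observe the telescoping, with the paper's version writing $|H_{\Delta^*}|/t_{\Delta^*} = |\hat H|/r^{2i-1}$ as an equality (implicitly covering the $t_\Delta = 1$ branch) where you make the two cases of the $\max$ explicit. One small observation: your inductive hypothesis is never actually invoked in either branch --- the first branch cancels $|H_\Delta|$ outright, and the second derives $|H_\Delta| \le |\hat H|/r^{2i+1}$ from the $\max$ condition rather than from the IH --- so the argument is really a direct per-level computation, just as in the paper, and the induction wrapper can be dropped.
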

\begin{proof}
Consider a cell $\Delta^* \in \mathbf{\Delta}_{i - 1}$ where $\Delta \in S_{\Delta^*}$ then $|H_{\Delta}| = \frac{1}{r} |H_{\Delta^*} \cap \Delta|$ by construction,
and since $\Delta$ is a cell in a $\frac{1}{t_{\Delta^*}}$-cutting of $H_{\Delta^*}$ then 
$|H_{\Delta^*} \cap \Delta| \le |H_{\Delta^*}| / t_{\Delta^*} = |\hat H| / r^{2i-1}$; 
hence 
$|H_{\Delta}| = |\Delta \cap H_{\Delta^*}|/r \leq |\hat H| / r^{2i}$.  
\end{proof}

\begin{lemma}
	\label{lem:vertexsize}
	A cell $\Delta \in \mathbf{\Delta}_{i}$ with sample $H_{\Delta}$ has $\E[|\c{A}(H_{\Delta})|] \le \frac{|\c{A}(\hat{H}) \cap \Delta|}{r^{di}}$ expected vertices.
\end{lemma}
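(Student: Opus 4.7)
The plan is to apply linearity of expectation, bounding $\E[|\c{A}(H_\Delta)|]$ by summing over candidate vertices $v \in \c{A}(\hat H) \cap \Delta$ the probability that $v$ persists as a vertex of the arrangement restricted to $H_\Delta$. Since $H_\Delta \subseteq \hat H$, every vertex of $\c{A}(H_\Delta)$ lying in $\Delta$ must already appear in $\c{A}(\hat H) \cap \Delta$, so this sum is a valid upper bound on the expectation.

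To bound each survival probability, I would fix the ancestor chain $\Delta = \Delta_i \subset \Delta_{i-1} \subset \cdots \subset \Delta_0 = \R^d$ with successive samples $\hat H = H_{\Delta_0} \supseteq H_{\Delta_1} \supseteq \cdots \supseteq H_{\Delta_i} = H_\Delta$, and consider a vertex $v \in \c{A}(\hat H) \cap \Delta$ defined by $d$ halfspaces $h_1,\ldots,h_d \in \hat H$. The key geometric observation is that $v \in \Delta \subseteq \Delta_j$ for every $j \le i$, so the bounding hyperplane of each $h_k$ crosses every $\Delta_j$. Consequently, if all $d$ halfspaces lie in $H_{\Delta_{j-1}}$ then they lie in $H_{\Delta_{j-1}} \cap \Delta_j$, and hence are eligible for the $j$th sampling step.

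By construction, at the $j$th step we draw a uniform subsample of $H_{\Delta_{j-1}} \cap \Delta_j$ of size a $1/r$ fraction of the total. A standard hypergeometric computation, $\binom{n-d}{k-d}/\binom{n}{k} = \prod_{s=0}^{d-1}\frac{k-s}{n-s} \le (k/n)^d$ with $k = n/r$, shows that the probability all $d$ fixed halfspaces survive a single round is at most $(1/r)^d$. The sampling rounds are independent across levels given the chain, so chaining the $i$ conditional bounds gives $\Pr[v \in \c{A}(H_\Delta)] \le (1/r^d)^i = 1/r^{di}$. Summing over the at most $|\c{A}(\hat H) \cap \Delta|$ candidates by linearity of expectation yields the claim.

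The only subtle point is the single-round bound: the survival events of the $d$ defining halfspaces are negatively correlated rather than independent, so one must argue that the joint survival probability is nevertheless at most the product of the marginal $1/r$ probabilities. This is exactly what the hypergeometric inequality above provides, and an analogous bound holds up to lower-order terms in the with-replacement variant mentioned in the text, so no essential changes are needed to the argument.
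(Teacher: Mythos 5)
Your proposal is correct and takes essentially the same approach as the paper: fix the candidate vertex and its $d$ defining halfspaces, bound the per-level survival probability by $1/r^d$, chain across the $i$ levels to get $1/r^{di}$, and finish by linearity of expectation. The paper asserts the per-level bound $\Pr(h_1 \in H_\Delta \wedge \cdots \wedge h_d \in H_\Delta) \le 1/r^d$ without justification; you supply the missing hypergeometric/negative-association argument and note the geometric fact (the vertex lying in every ancestor cell forces its defining hyperplanes into every conflict list) that the paper leaves implicit, so your write-up is in fact a more rigorous version of the same proof.
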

\begin{proof}
Consider a vertex lying inside of $\Delta$ induced by the intersection of $d$ halfspaces $h_1, \ldots, h_d \in H_{\Delta^*}$ in $\Delta$ (where $\Delta \subset \Delta^* \in \mathbf{\Delta}_{i-1}$). The probability that this vertex is in $H_{\Delta}$ is 
$
\Pr(h_1 \in H_{\Delta} \wedge \ldots \wedge h_d \in H_{\Delta}) 
\le \frac{1}{r^d}.
$ 
The probability that a vertex survives through $i$ samples is then upper bounded by $\frac{1}{r^{di}}$ and by linearity of expectation $\E[\c{A}(H_{\Delta})] \le \frac{|\c{A}(\hat{H}) \cap \Delta)|}{r^{d i}} $.
\end{proof}

Combining these results we show the expected number of cells increases as $O(r^{di})$.   This leverages Theorem \ref{thm:cutting} using the number of vertex dependent bound for size of the cutting.  

\begin{lemma}
\label{lem:cellsize}
At a level $i$ the expected number of cells is $\E[|\mathbf{\Delta}_i|] = O(r^{di})$.  
\end{lemma}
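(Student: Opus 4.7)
The plan is to prove the bound by induction on the level $i$, with the base case $|\mathbf{\Delta}_0|=1$ being immediate. For the inductive step, I would bound $|S_\Delta|$ for a single parent cell $\Delta \in \mathbf{\Delta}_i$ using Theorem~\ref{thm:cutting}, then sum over all $\Delta$ in the level.

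First, I would observe that $t_\Delta \le r$ always holds: by Lemma~\ref{lem:cellsize1} we have $|H_\Delta| \le |\hat H|/r^{2i}$, so the quantity inside the $\max$ defining $t_\Delta$ is at most $r$. Applying Theorem~\ref{thm:cutting} with the $1/t_\Delta$-cutting of $(\Delta,H_\Delta)$ yields
\begin{equation*}
|S_\Delta| \;=\; O\!\left(\eta_\Delta \left(\tfrac{t_\Delta}{|H_\Delta|}\right)^{\!d} + t_\Delta^{d-1}\right),
\end{equation*}
where $\eta_\Delta = |\c{A}(H_\Delta)|$. Substituting $t_\Delta \le |H_\Delta|\,r^{2i+1}/|\hat H|$ into the first term gives $(t_\Delta/|H_\Delta|)^d \le r^{d(2i+1)}/|\hat H|^d$, while the second term is at most $r^{d-1}$.

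Next I would sum over $\Delta \in \mathbf{\Delta}_i$ and take expectations, conditioning first on $\mathbf{\Delta}_i$ and then removing the conditioning. The second term contributes $O(r^{d-1}) \cdot \E[|\mathbf{\Delta}_i|] = O(r^{di+d-1})$ by the inductive hypothesis. For the first term, applying Lemma~\ref{lem:vertexsize} gives $\E[\eta_\Delta \mid \mathbf{\Delta}_i] \le |\c{A}(\hat H) \cap \Delta|/r^{di}$, and since the cells $\Delta \in \mathbf{\Delta}_i$ are disjoint we have $\sum_\Delta |\c{A}(\hat H) \cap \Delta| \le |\c{A}(\hat H)| = O(|\hat H|^d)$. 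Combining,
\begin{equation*}
\E\!\left[\sum_{\Delta \in \mathbf{\Delta}_i} \eta_\Delta \cdot \tfrac{r^{d(2i+1)}}{|\hat H|^d}\right]
\;\le\; \tfrac{r^{d(2i+1)}}{|\hat H|^d} \cdot \tfrac{|\hat H|^d}{r^{di}}
\;=\; O(r^{di + d}).
\end{equation*}
Adding the two contributions yields $\E[|\mathbf{\Delta}_{i+1}|] = O(r^{d(i+1)})$, closing the induction.

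The main subtlety is the proper handling of the conditional expectation across levels: Lemma~\ref{lem:vertexsize} produces a bound on $\E[\eta_\Delta]$ relative to $|\c{A}(\hat H) \cap \Delta|$, so I need to take the expectation over the randomness used to build $\mathbf{\Delta}_i$ in a single step (via tower property) rather than disentangling the sampling at each prior level. The other small point is checking that in the degenerate case $t_\Delta = 1$, the trivial bound $|S_\Delta| = O(1)$ is absorbed into the $r^{d-1}$ term, so the same accounting applies uniformly.
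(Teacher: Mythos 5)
Your proof is correct and follows essentially the same route as the paper's: it applies Theorem~\ref{thm:cutting} to bound each $|S_\Delta|$, uses Lemma~\ref{lem:cellsize1} to get $t_\Delta \le r$ and to control $(t_\Delta/|H_\Delta|)^d$, uses Lemma~\ref{lem:vertexsize} together with the disjointness of cells to bound the first term by $O(r^{d(i+1)})$, and absorbs the $r^{d-1}\E[|\mathbf{\Delta}_i|]$ term by induction. The only cosmetic difference is that you make the induction and the conditioning explicit (and absorb the $t_\Delta=1$ case into the $t_\Delta^{d-1}$ term rather than via the paper's ``factor 2'' remark), which if anything states the argument a bit more carefully than the paper does.
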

\begin{proof}
The number of cells at a level $i$ is in expectation $\E[|\mathbf{\Delta}_i|] = \E[\sum_{\Delta \in \mathbf{\Delta}_{i - 1}} O(|\c{A}(H_{\Delta}) \cap \Delta| \frac{t_{\Delta}^d}{|H_{\Delta}|^d} + t_{\Delta}))]$ from Theorem \ref{thm:cutting}.
Since $t_{\Delta} = \max(\frac{|H_{\Delta}| r^{2i + 1} }{|\hat{H}|}, 1)$ we can divide cells 
 in $\mathbf{\Delta}_{i - 1}$ into a set $\mathbf{\Delta}^+_{i - 1}$ where $t_{\Delta} > 1$, and is therefore split, and a set of cells $\mathbf{\Delta}_{i - 1} \setminus \mathbf{\Delta}^+_{i - 1}$ where $t_{\Delta} = 1$, and is therefore not split. The set of non split cells $\mathbf{\Delta}_{i - 1} \setminus \mathbf{\Delta}^+_{i - 1}$ cannot be larger than $\mathbf{\Delta}_{i - 1}$ so in our estimate of $|\mathbf{\Delta}_i|$ these will contribute at most a factor $2$.  

\begin{align*}
\E[|\mathbf{\Delta}_i|] 
=& 
\E\left[\sum_{\Delta \in \mathbf{\Delta}_{i - 1}} O\left(|\c{A}(H_{\Delta}) \cap \Delta| \frac{t_{\Delta}^d}{|H_{\Delta}|^d} + t_{\Delta}^{d-1})\right) \right] 
\\ \le & 
\E \left[\sum_{\Delta \in \mathbf{\Delta}^+_{i - 1}} O\left(|\c{A}(H_{\Delta}) \cap \Delta| \frac{r^{2di-d} }{|\hat{H}|^d} + t_{\Delta}^{d-1}\right)\right]  
\\ =& 
\sum_{\Delta \in \mathbf{\Delta}^+_{i - 1}} O\left(\E[|\c{A}(H_{\Delta}) \cap \Delta| ] \frac{r^{2di-d} }{|\hat{H}|^d} + t_{\Delta}^{d-1}\right) 
\end{align*}
By Lemma \ref{lem:cellsize1} we can bound $t_\Delta \leq \frac{|H_\Delta| r^{2i+1}}{|\hat H|} \leq r$, to replace the second term.  
\[
\le  
\sum_{\Delta \in \mathbf{\Delta}'_{i - 1}} O\left(\E[|\c{A}(H_{\Delta}) \cap \Delta| ] \frac{r^{2di-d} }{|\hat{H}|^d} + r^{d-1}\right) 
\leq
\sum_{\Delta \in \mathbf{\Delta}'_{i - 1}} O\left(\E[|\c{A}(H_{\Delta}) \cap \Delta| ] \frac{r^{2di-d} }{|\hat{H}|^d}\right) + O(r^{d-1}|\mathbf{\Delta}_{i - 1}|) \label{eq:cut_size2}
\]
By Lemma \ref{lem:vertexsize} $\E[|\c{A}(H_{\Delta}) \cap \Delta| ] \le \frac{|\c{A}(\hat{H}) \cap \Delta'|}{r^{di - d}}$, and this yields
\[
\leq
\sum_{\Delta \in \mathbf{\Delta}^+_{i - 1}} O\left(\frac{|\c{A}(\hat{H}) \cap \Delta|}{r^{di -d}} \frac{r^{2di - d} }{|\hat{H}|^d}\right) +  O(r^{d-1}|\mathbf{\Delta}_{i - 1}|) 
= 
O\left (\frac{r^{di}}{|\hat{H}|^d} \right ) \sum_{\Delta \in \mathbf{\Delta}^+_{i - 1}} |\c{A}(\hat{H}) \cap \Delta| + O(r^{d-1}|\mathbf{\Delta}_{i - 1}|).  
\]
Since a vertex in $\c{A}(\hat{H}) \cap \Delta$ can only be in one cell $\sum_{\Delta \in \mathbf{\Delta}^+_{i - 1}} |\c{A}(\hat{H}) \cap \Delta| \le |\hat{H}|^d$, since this quantity upper bounds the number of vertices in $\c{A}(\hat{H})$.  Thus finally
\[
\E[|\mathbf{\Delta}_i|] \le
 \ldots \text{chain of inequalities} \ldots
\le  
O(r^{di}+ r^{d-1}|\mathbf{\Delta}_{i - 1}|) = O(r^{di}).  \qedhere
\]
\end{proof}

\subparagraph{Sampling Error.}
Consider now that we wish to estimate $|\hat H \sqcap \Delta|$, the number of planes crossing under some $\Delta$. We will use that $\Delta$ lies within a nested sequence of cutting cells 
$\Delta = \Delta_i \subset \Delta_{i-1} \subset \ldots \subset \Delta_0$ with $\Delta_i \in \mathbf{\Delta}_i, \Delta_{i-1} \in \mathbf{\Delta}_{i-1}, \ldots, \Delta_0 \in \mathbf{\Delta}_0$, where $i \leq L$, and with corresponding samples
$\hat{H}=H_{\Delta_0}$ and $H_{\Delta_1}, \cdots, H_{\Delta_\ell}$.
%
%
We can also define $\hat m_i(\Delta)$ more generally for a cell $\Delta$ that is the subset of a cell $\Delta_i \in \mathbf{\Delta}_i$ (and its ancestors), but not necessary one of those cells.  It is defined
$
\hat m_i(\Delta) = r^i |H_{\Delta_{i-1}} \sqcap \Delta| + \sum_{j=0}^{i-1} r^j |H_{\Delta_{j-1}} \sqcap \Delta_j|.  
$
By this definition $\hat m_0(\Delta) = |\hat H \sqcap \Delta|$, and we to bound $|\hat m_i(\Delta_i) - |\hat H \sqcap \Delta_i||$.  

\begin{lemma}
	\label{lem:walk}
	$|\hat{m}_{i}(\Delta) - |\hat H \sqcap \Delta|| = O(i \sqrt{|\hat{H}|\log \frac{2\ell}{\delta'}})$ with probability $1 - \delta'$.
\end{lemma}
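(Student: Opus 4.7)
The plan is to view the error $\hat m_i(\Delta) - |\hat H \sqcap \Delta|$ as a sum of martingale differences indexed by the levels of the recursion, and to control each difference by a Chernoff/Hoeffding bound together with a union bound over the at most $\ell$ levels. Concretely, I would set up the Doob martingale $M_j := \E[\hat m_i(\Delta) \mid H_{\Delta_0},\ldots,H_{\Delta_j}]$, whose boundary values are $M_0 = |\hat H \sqcap \Delta|$ (by unbiasedness of the sampling scheme, using $\hat m_0 \equiv 0$ and $H_{\Delta_0}=\hat H$) and $M_{i-1} = \hat m_i(\Delta)$ (since $\hat m_i(\Delta)$ is $H_{\Delta_{i-1}}$-measurable).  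The main algebraic observation I would verify is the clean closed form
\[
M_j \;=\; \hat m_j(\Delta_j) \;+\; r^j\,|H_{\Delta_j} \sqcap \Delta|, \qquad 0 \le j \le i-1,
\]
obtained by taking iterated expectations of the ``future'' contributions $r^{k-1}|H_{\Delta_{k-1}} \sqcap \Delta_k|$ for $k>j$ through the sampling chain, using per-level unbiasedness together with the geometric identity $|(H \cap \Delta_{k-1}) \sqcap \Delta_k| = |H \sqcap \Delta_k| - |H \sqcap \Delta_{k-1}|$ (valid because $\Delta_k \subseteq \Delta_{k-1}$), which lets the tail telescope.

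Once the closed form is in place, a direct computation gives the martingale increment
\[
M_j - M_{j-1} \;=\; r^j\,|H_{\Delta_j} \sqcap \Delta| \;-\; r^{j-1}\,|(H_{\Delta_{j-1}} \cap \Delta_j) \sqcap \Delta|,
\]
which is mean zero conditional on $H_{\Delta_{j-1}}$ since $H_{\Delta_j}$ is a uniform with-replacement sample of $|H_{\Delta_{j-1}} \cap \Delta_j|/r$ elements from $H_{\Delta_{j-1}} \cap \Delta_j$. The quantity $|H_{\Delta_j} \sqcap \Delta|$ is thus a sum of $|H_{\Delta_j}|$ i.i.d.\ Bernoulli indicators, so Hoeffding's inequality yields, with conditional probability at least $1-\delta'/\ell$, the deviation bound $|M_j - M_{j-1}| \le r^j\sqrt{\tfrac{|H_{\Delta_j}|}{2}\log\tfrac{2\ell}{\delta'}}$. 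Invoking Lemma~\ref{lem:cellsize1}'s inequality $|H_{\Delta_j}| \le |\hat H|/r^{2j}$ exactly cancels the $r^j$ prefactor, leaving $|M_j - M_{j-1}| = O\!\bigl(\sqrt{|\hat H|\log(2\ell/\delta')}\bigr)$ with the stated probability.

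A union bound across the $\ell$ levels, followed by the triangle inequality on $\hat m_i(\Delta) - |\hat H \sqcap \Delta| = \sum_{j=1}^{i-1}(M_j - M_{j-1})$, then gives the stated $O(i\sqrt{|\hat H|\log(2\ell/\delta')})$ error bound with probability $1-\delta'$. The main obstacle is the first step: verifying the closed form of $M_j$.  Iterating conditional expectations through the nested sampling chain requires a careful telescoping in which each level's ``below $\Delta$ but crossing $\Delta_k$'' contribution cancels against the adjacent level's partial count, so that only the clean residual $r^j|H_{\Delta_j} \sqcap \Delta|$ survives; once that identity is in hand, the Hoeffding step and union bound are routine.
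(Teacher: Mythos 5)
Your proposal is essentially the paper's proof with a Doob-martingale wrapper: the paper simply applies the triangle inequality to the telescoping sum $\hat m_i(\Delta) - \hat m_0(\Delta) = \sum_j \bigl(\hat m_{j+1}(\Delta) - \hat m_j(\Delta)\bigr)$, and your martingale increments $M_j - M_{j-1}$ are (after a correction, see below) exactly these telescoping differences; both arguments then bound each difference by a Chernoff/Hoeffding-type bound (the paper phrases it as an $\eps$-sample inequality over a constant-VC range, but for a fixed $\Delta$ your i.i.d.\ Bernoulli observation suffices), invoke Lemma~\ref{lem:cellsize1} to cancel the $r^j$ prefactor, and take a union bound over the $\ell$ levels. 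One correction: your claimed closed form should be $M_j = \hat m_j(\Delta_j) + r^{j+1}|H_{\Delta_j} \sqcap \Delta|$ (i.e.\ $M_j = \hat m_{j+1}(\Delta)$), not with exponent $r^j$; with your exponent, $\E[M_j \mid H_{\Delta_0},\ldots,H_{\Delta_{j-1}}] - M_{j-1} = (r^j - r^{j-1})|H_{\Delta_{j-1}} \sqcap \Delta_j| \neq 0$, so the martingale property fails. This is a propagating off-by-one; it happens not to disturb the cancellation against $|H_{\Delta_j}| \le |\hat H|/r^{2j}$, so the final bound is unchanged once the indexing is fixed.
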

\begin{proof}
By the triangle inequality we expand
\begin{align*}
|\hat m_i(\Delta) - \hat m_0(\Delta)| 
&\leq 
\sum_{j=0}^{i-1} |\hat m_{j+1}(\Delta) - \hat m_j(\Delta)|
\\ & =
\sum_{j=0}^{i-1} \left| \begin{array}{l}
\displaystyle{r^{j+1} |H_{\Delta_j} \sqcap \Delta| + \sum_{\ell=0}^j r^\ell |H_{\Delta_{\ell-1}} \sqcap \Delta_\ell|} 
\\ \phantom{12345} - 
\displaystyle{\left( r^{j} |H_{\Delta_{j-1}} \sqcap \Delta| + \sum_{\ell=0}^{j-1} r^{\ell-1} |H_{\Delta_{\ell-2}} \sqcap \Delta_{\ell-1}| \right)}
\end{array}\right|
\\ & =
\sum_{j=0}^{i-1} \left|
r^{j+1} |H_{\Delta_j} \sqcap \Delta| + r^j |H_{\Delta_{j-1}} \sqcap \Delta_j| 
- 
r^{j} |H_{\Delta_{j-1}} \sqcap \Delta|
\right|
\\ & =
\sum_{j=0}^{i-1} r^j \left|
r |H_{\Delta_j} \sqcap \Delta| - |(H_{\Delta_{j-1}} \cap \Delta_j) \sqcap \Delta |
\right|
\\ & \leq 
\sum_{j=0}^{i-1} r^j \cdot r \cdot C_d \sqrt{|H_{\Delta_j}| \log \frac{1}{\delta^{\dagger}}}.
\end{align*}
The last inequality follows since $H_{\Delta_j}$ is a random sample from $H_{\Delta_{j-1}} \cap \Delta_j$, and the $\sqcap \Delta$ restriction is a constant VC-dimension range~\cite{VC71}; the constant $C_d$ depends only on $d$, and $\delta^\dagger$ is the probability of failure for each term in the sum.  In particular we use
$
\left | \frac{|H_{\Delta_j} \sqcap \Delta|}{|H_{\Delta_j}|} - \frac{|(H_{\Delta_{j-1}} \cap \Delta_j) \sqcap \Delta|}{|H_{\Delta_{j-1}} \cap \Delta_j|} \right | 
\le 
C_d \sqrt{\frac{1}{|H_{\Delta_j}|} \log \frac{1}{\delta^\dagger}}
$
and multiply by $|H_{\Delta_{j-1}} \cap \Delta_j| = r |H_{\Delta_j}|$.

Applying Lemma \ref{lem:cellsize1} then $|H_{\Delta_j}| \le \frac{|\hat{H}|}{r^{2j}}$. Hence
\[
\sum_{j = 0}^{i - 1} r^j  \sqrt{|H_{\Delta_j}| \log \frac{1}{\delta^\dagger}} 
\le 
\sum_{j = 0}^{i - 1} r^j \sqrt{\frac{|\hat{H}|}{ r^{2j}} \log \frac{1}{\delta^\dagger}} 
\le
\sum_{j = 0}^{i - 1} \sqrt{|\hat{H}|\log \frac{1}{\delta^\dagger}}.
\]
And to ensure a failure probability of $1 - \delta'$ for the sequence of samples, set $\delta^\dagger = \delta'/2\ell$. 
\[
  |\hat{m}_{i}(\Delta) - |\hat H \sqcap \Delta|| \le i \cdot C_d \sqrt{|\hat{H}|\log \frac{2\ell}{\delta'}}. \qedhere
\]
\end{proof}

Now how large does $\hat{H}$ need to be to ensure a correct estimate of the cells at the leaves of the arrangement. This largely depends on the number of $\eps$-samples taken in total.

\begin{lemma}
	\label{lem:Hsize}
	We can ensure that $|\hat{m}_{i}(\Delta) - |\hat H \sqcap \Delta|| \le \eps$ for all $\Delta \in \mathbf{\Delta}_i$ with probability $\delta$ by setting $|\hat{H}| = O(\frac{i^3}{\eps^2} \log \frac{i}{\delta})$.  
\end{lemma}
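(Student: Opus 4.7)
The strategy is to apply Lemma~\ref{lem:walk} to a single cell at level $i$, determine what size $|\hat H|$ makes the error small enough (interpreting the stated ``$\le \eps$'' as the natural additive-density bound $\eps |\hat H|$ consistent with the approximate range counting goal), and then take a union bound over all $\Delta \in \mathbf{\Delta}_i$. The number of cells to union-bound over will come from Lemma~\ref{lem:cellsize}, and the extra factor of $i$ in the final answer (compared to the naive $i^2/\eps^2$) will come from paying $\log r^{di}$ inside the $\log(1/\delta')$ term of Lemma~\ref{lem:walk}.

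For a fixed $\Delta \in \mathbf{\Delta}_i$ with chain length $\ell \le i$, Lemma~\ref{lem:walk} gives $|\hat m_i(\Delta) - |\hat H \sqcap \Delta|| = O(i \sqrt{|\hat H| \log(2\ell/\delta')})$ with probability $1-\delta'$. Enforcing $i \sqrt{|\hat H| \log(2\ell/\delta')} \le \eps |\hat H|$ and rearranging yields the pointwise requirement $|\hat H| = \Omega((i^2/\eps^2) \log(2\ell/\delta'))$.

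To get the bound simultaneously for all $\Delta \in \mathbf{\Delta}_i$, I would union-bound using Lemma~\ref{lem:cellsize}: since $\E[|\mathbf{\Delta}_i|] = O(r^{di})$, Markov's inequality gives $|\mathbf{\Delta}_i| \le c r^{di}$ with constant probability. Conditioning on this event and setting $\delta' = \delta / (c r^{di})$ absorbs the per-cell failure budget and yields total failure probability $O(\delta)$ after constant rescaling. With $r, d$ constants and $\ell \le i$, we get $\log(2\ell/\delta') = O(\log(i/\delta) + di \log r) = O(i + \log(i/\delta))$, so
\[
|\hat H| = \Omega\!\left(\frac{i^2}{\eps^2}\bigl(i + \log(i/\delta)\bigr)\right) = O\!\left(\frac{i^3}{\eps^2} \log \frac{i}{\delta}\right),
\]
matching the claim.

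The main obstacle is that $|\mathbf{\Delta}_i|$ is itself a random variable, determined by the very samples $\hat H, H_{\Delta_1}, \ldots$ that also drive the Lemma~\ref{lem:walk} bound, so a direct union bound over $\mathbf{\Delta}_i$ is not immediate. I would handle this by conditioning on the high-probability event $\{|\mathbf{\Delta}_i| \le c r^{di}\}$ (from Markov on Lemma~\ref{lem:cellsize}) and then taking the union bound over at most $c r^{di}$ labeled cell slots; alternatively one could fix a deterministic worst-case upper bound on the number of cells produced by the cutting subroutine at each level and union bound over that. Either way, once the cell-count dependence is decoupled, the remaining calculation is just algebra on the expression from Lemma~\ref{lem:walk}.
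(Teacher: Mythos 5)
Your proof follows essentially the same route as the paper's: apply Lemma~\ref{lem:walk} per cell, union-bound over the $O(r^{di})$ cells of level $i$ via $\delta' = \delta/|\mathbf{\Delta}_i|$, and observe that the $di\log r$ term absorbed into $\log(1/\delta')$ accounts for the extra factor of $i$ (i.e. $i^2 \to i^3$). Your reading of the stated ``$\le \eps$'' as the additive-density bound $\eps|\hat H|$ is also exactly what the paper's proof does — the lemma statement is slightly imprecise and you correctly normalized it.

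Where you go beyond the paper is in flagging and handling the fact that $|\mathbf{\Delta}_i|$ is a random variable correlated with the samples driving the Lemma~\ref{lem:walk} estimate. The paper simply sets $\delta' = \delta/(2|\mathbf{\Delta}_i|)$ and then substitutes $|\mathbf{\Delta}_i| = O(r^{di})$, which is only an expectation from Lemma~\ref{lem:cellsize}; it does not explicitly decouple the cell-count randomness from the per-cell failure events. Your fix — conditioning on $\{|\mathbf{\Delta}_i| \le c r^{di}\}$ via Markov and paying a constant rescaling of the failure probability (or alternatively using a deterministic cutting-size cap) — is the correct way to make this rigorous, and it costs nothing asymptotically. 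So you have not taken a different route; you have taken the same route and tightened the one step the paper leaves informal.
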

\begin{proof}
	We can set the probability of $\delta' = \delta / (2|\mathbf{\Delta}_i|)$ in Lemma \ref{lem:walk} to ensure a failure probability of $\delta$ for each cell $\Delta \in \mathbf{\Delta}_i$ and therefore 
	$|\hat{m}_{i}(\Delta) - |\hat H \sqcap \Delta|| = O(i \sqrt{|\hat{H}|\log \frac{2 i |\mathbf{\Delta}_i|}{\delta}})$.
	So if $\eps |\hat{H}| = |\hat{m}_{i}(\Delta) - |\hat H \sqcap \Delta|| = O(i \sqrt{|\hat{H}|\log \frac{2 i |\mathbf{\Delta}_i|}{\delta}})$ then $|\hat H| = O(\frac{i^2}{ \eps^2}\log \frac{2i|\Delta_i|}{\delta})$.
	
	 From Lemma \ref{lem:cellsize} we know that $|\mathbf{\Delta}_i| = O(r^{d i})$, and so $|\hat{H}| = O(\frac{i^3}{\eps^2} \log \frac{i}{\delta})$.
\end{proof}

\subparagraph{Runtime.}
The time to compute an estimate for $\Delta \in \mathbf{\Delta}_i$ is linear in the number of lines in $H_{\Delta} \le \frac{|\hat{H}|}{r^{2i}}$.  
We can then devise the expected runtime for computing estimates for all cells in a level $i$, of which there are $\E[|\mathbf{\Delta_i}|] = O(r^{di})$, so the total for level $i$ is 
$\sum_{\Delta \in \mathbf{\Delta}_i} O(\frac{|\hat{H}|}{r^{2i}}) 
= 
O(\frac{|\mathbf{\Delta}_i||\hat{H}|}{r^{2i}}) 
= 
O(r^{i(d - 2)} |\hat{H}|)$. 
The total expected runtime over all levels is then 
$\sum_{i = 0}^{L} O(r^{i(d - 2)}|\hat{H}|)$. 
Furthermore, if $d = 2$ then this will be $O(\frac{L^4}{\eps^2} \log \frac{L}{\delta})$.   If $d > 2$ then each layer of the cutting will dominate the previous layers in runtime and so the total time will be bounded by the time to compute the last layer as  
$\sum_{i = 0}^{L} O(r^{i(d - 2)}|\hat{H}|) = O(r^{(L + 1)(d -2)}\frac{L^3}{\eps^2} \log \frac{L}{\delta})$.

Next we want to achieve that all cells $\Delta \in \mathbf{\Delta}_L$ are of size $|H_\Delta| < \log_r \frac{1}{\eps}$.  This can be achieved via $|H_\Delta| < |\hat H| / (r^{2L}) \le \log_r \frac{1}{\eps}$ by setting the maximum level $L > \frac{1}{2} \log_r \frac{|\hat H|}{\log_r \frac{1}{\eps}} = O(\log \frac{1}{\eps})$.  
Then a query can at first recursively descend into the structure for $O(L) = O(\log \frac{1}{\eps})$ steps, and upon reaching a leaf node, can enumerate the leaf node's sample which will take at most $O(\log\frac{1}{\eps})$ time again. 
With $L= O(\log \frac{1}{\eps})$, the total time to compute the cell division for $d= 2$ is $O(\frac{L^4}{\eps^2} \log \frac{L}{\delta}) = O(\frac{1}{\eps^2} \log^4 \frac{1}{\eps} \log \frac{\log \frac{1}{\eps}}{\delta})$. 
For $d > 2$, then $O(r^{(L + 1)(d -2)}\frac{L^3}{\eps^2} \log \frac{L}{\delta}) = O(\frac{1}{\eps^d} \log^3 \frac{1}{\eps} \log \frac{\log \frac{1}{\eps}}{\delta})$ since $r^{(\ell + 1)(d - 2)} = r r^{(d - 2)\log_r \frac{1}{\eps}} = r r^{\log_r \frac{1}{\eps^{d - 2}}}= \frac{r}{\eps^{d - 2}}$.

\begin{theorem}
\label{thm:ds-cost}
We can build an $\eps$-approximate halfspace range counting data structure of size $O(1/\eps^d)$ that for any halfspace $h \in \c{H}_d$ returns in $O(\log (1/\eps))$ time returns a count $\hat m(h)$ so that $|\hat m(h) - |h \cap X|| \leq \eps |X|$.  
The total expected construction time, with probability $1-\delta$, 
for $d = 2$ is $O(|X| + \frac{1}{\eps^2} \log^4 \frac{1}{\eps} \log \frac{\log \frac{1}{\eps}}{\delta})$, and 
for constant $d > 2$ is $O(|X| + \frac{1}{\eps^d} \log^3 \frac{1}{\eps} \log \frac{\log \frac{1}{\eps}}{\delta})$.
\end{theorem}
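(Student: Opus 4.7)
The plan is to package the machinery already developed in this section. I would begin by dualizing $X$ into a multiset $H$ of halfspaces in $\R^d$, so that counting $|h\cap X|$ becomes counting members of $H$ that lie fully below the dual point $q_h$; the red/blue (or weighted) case is handled by running the construction twice, once per color. Next, draw an initial random sample $\hat H \subseteq H$ of the size prescribed by Lemma \ref{lem:Hsize} (with an extra constant blowup addressed below), feed $(\mathbf{\Delta}_0,\hat H)$ into Algorithm \ref{alg:refine}, and halt at the level $L = O(\log(1/\eps))$ at which every leaf cell $\Delta$ satisfies $|H_\Delta|\le \log_r(1/\eps)$; Lemma \ref{lem:cellsize1} guarantees that such an $L$ suffices.

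The three quantitative claims then chain together from the preceding lemmas. For size, Lemma \ref{lem:cellsize} gives $\E[|\mathbf{\Delta}_i|]=O(r^{di})$, a geometric progression dominated by the last level $|\mathbf{\Delta}_L|=O(r^{dL})=O(1/\eps^d)$, with each cell storing only $O(1)$ combinatorial data plus, at the leaves, a sample of $O(\log(1/\eps))$ halfspaces. For query time, given $h$ I would descend by point location on $q_h$ one child per level in $O(1)$ time (cells have constant complexity and fixed fan-out for constant $r,d$), reaching a leaf in $O(L)$ steps, and then scan the leaf's sample in $O(\log(1/\eps))$ time to account for halfspaces crossing that leaf, for a total of $O(\log(1/\eps))$. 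For construction time, the per-level expected cost $O(r^{i(d-2)}|\hat H|)$ is already summed in the paragraph immediately preceding the theorem to give the stated bounds; the $O(|X|)$ additive term covers the dualization pass and the initial reservoir sample into $\hat H$.

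The step I expect to need the most care is turning the in-sample guarantee of Lemma \ref{lem:Hsize} (additive error $\eps|\hat H|$ against $|\hat H\sqcap\Delta|$) into the required additive $\eps|X|$ error against $|H\sqcap q_h|$. The final estimate is a rescaling by $|H|/|\hat H|$ of $\hat m_L(\Delta_h)$ together with the exactly enumerated leaf contribution, so it inherits two independent sources of error: the Lemma \ref{lem:Hsize} error internal to $\hat H$, and the sampling error between $\hat H$ and the full $H$. The latter is controlled by a standard $\eps$-sample bound on constant VC-dimension halfspace ranges, which costs only an extra constant factor in $|\hat H|$. A union bound over the two failure events (each charged $\delta/2$) then yields the full probability $1-\delta$ guarantee stated in the theorem.
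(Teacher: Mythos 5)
Your proposal is correct and follows essentially the same route as the paper: dualize, seed Algorithm~\ref{alg:refine} with a sample $\hat H$ sized per Lemma~\ref{lem:Hsize}, run for $L = O(\log(1/\eps))$ levels so each leaf has $|H_\Delta| \le \log_r(1/\eps)$, then invoke Lemmas~\ref{lem:cellsize1}--\ref{lem:Hsize} for size, error, and the per-level cost sum for construction time, with queries answered by $O(L)$-step point location plus an $O(\log(1/\eps))$ leaf scan. The one place you go beyond the paper's explicit exposition --- bridging the in-sample guarantee $|\hat m_i(\Delta)-|\hat H\sqcap\Delta||\le\eps|\hat H|$ to the claimed $\eps|X|$ error via a standard $\eps$-sample bound between $\hat H$ and $H$, with a $\delta/2$ union bound and constant blowup of $|\hat H|$ --- is exactly the right way to close that step, which the paper leaves implicit.
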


\subparagraph{Finding the Maximum Range.}
To query the structure we need a set of viable halfspaces that cover the space well enough to approximately hit the maximum region. We can use a random sample of points from the primal space of size $O(\frac{1}{\eps})$ to induce a set of $O(1/\eps^d)$ halfspaces $\hat{\c{H}}_d$, as was done in \cite{MP18b}, to get a constant probability that at least one halfspace is $O(\eps)$-close to the maximum region.  Then we repeat the procedure $\log \frac{1}{\delta}$ times and take the maximum found region to magnify the success probability to $1 - \delta$ (see \cite{MP18b} for details). 
For each query hyperplane $h$ we can query a structure constructed over $R$ and over $B$ and then compute the function value from this; we return the $h$ which maximizes $\Phi(h)$.  

The set of query hyperplanes from this tactic is of size $O(\frac{1}{\eps^d})$ for constant $d$ and at each level we can determine which cell the dual point of the halfspace falls into by testing a constant $O(r^d)$ number of constant sized cells.  At a leaf of the structure we check if the remaining, at most $\log \frac{1}{\eps}$, dual halfspaces are below the query dual point, to determine the total count. The query structure has $O(\log \frac{1}{\eps})$ levels and at each level a constant amount of work is done; it is repeated for each of $O(\frac{1}{\eps^d})$ halfspaces, and the entire endeavor is repeated $\log \frac{1}{\delta}$ times to reduce the probability of failure.   The full runtime is $O(\frac{1}{\eps^d} \log \frac{1}{\eps} \log \frac{1}{\delta})$,  plus the construction time of the data structure (from Theorem \ref{thm:ds-cost}) with dominates the cost.

\begin{theorem}
\label{thm:upperbnd}
We can solve \eMH with probability $1-\delta$, in expected time 
$O(|X| + \frac{1}{\eps^d} \log^4\frac{1}{\eps}\log \frac{\log \frac{1}{\eps}}{\delta})$ for $d = 2$ and 
$O(|X| + \frac{1}{\eps^d} \log^3 \frac{1}{\eps} \log \frac{\log \frac{1}{\eps}}{\delta})$ for constant $d > 2$.  
\end{theorem}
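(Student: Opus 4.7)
The plan is to combine Theorem \ref{thm:ds-cost} with a carefully chosen candidate set of halfspaces, and then amplify the success probability by repetition. Concretely I would invoke Theorem \ref{thm:ds-cost} twice with accuracy parameter $\eps' = \eps/(3c)$, where $c$ is the Lipschitz constant of $\phi$: once on $R$ to obtain an estimator $\hat\mu_R$ with $|\hat\mu_R(h) - \mu_R(h)| \le \eps'$ for every $h \in \c{H}_d$, and once on $B$ to obtain an analogous $\hat\mu_B$. Both structures together occupy $O(1/\eps^d)$ space, answer queries in $O(\log(1/\eps))$ time, and are built in the claimed expected time, each with constant success probability to be boosted at the end.

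Next I would produce the candidate set $\hat{\c{H}}_d$ by drawing a random sample $X_0 \subset X$ of $O(1/\eps)$ points and enumerating all combinatorially distinct halfspaces pinned by at most $d$ points of $X_0$; by Sauer's lemma $|\hat{\c{H}}_d| = O(1/\eps^d)$. As argued in \cite{MP18b}, with constant probability this set contains some $\hat h$ whose symmetric difference with $h^*$ is at most $\eps' |R|$ on the red set and $\eps' |B|$ on the blue set, so by the combinatorial Lipschitz property of $\Phi$ one has $\Phi(h^*) - \Phi(\hat h) \le 2c\eps' \le 2\eps/3$.

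For each $h \in \hat{\c{H}}_d$ I evaluate $\phi(\hat\mu_R(h), \hat\mu_B(h))$ with two queries in $O(\log(1/\eps))$ time and one $O(1)$ arithmetic step, and return the $h$ maximizing this estimate. Because the count error is at most $\eps'$ in each of $\mu_R$ and $\mu_B$ simultaneously for every $h$ (this is why we need the data-structure guarantee to hold \emph{for all} halfspaces, not just in expectation), the Lipschitz property ensures the estimate of $\Phi(h)$ is within $\eps/3$ of the true value throughout. A short triangle-inequality chain through $h^*$, the best candidate $\hat h$, and the returned $h$ then yields $\Phi(h^*) - \Phi(h) \le \eps$. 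The query phase costs $O(\frac{1}{\eps^d}\log\frac{1}{\eps})$, which is dominated by the construction cost from Theorem \ref{thm:ds-cost}.

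To turn the constant success probability into $1-\delta$, I repeat the entire construct-and-query pipeline $\Theta(\log(1/\delta))$ times with independent random bits and report the single best candidate found; a union bound closes the argument and the extra factor is absorbed inside the existing $\log \frac{\log(1/\eps)}{\delta}$ term of Theorem \ref{thm:ds-cost}, giving the two claimed runtimes for $d=2$ and $d>2$. The main technical point to be careful about is budgeting the three approximation layers (the $\eps$-net step that produces $\hat{\c{H}}_d$, the additive error of the range-counting structure, and the Lipschitz translation from $(\mu_R,\mu_B)$-error to $\Phi$-error) so that their sum stays strictly within $\eps$; this is a matter of choosing the hidden constants in $\eps'$ appropriately, and everything else reduces to direct applications of results already established.
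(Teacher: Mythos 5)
Your high-level plan matches the paper's: build the approximate range-counting structure from Theorem~\ref{thm:ds-cost} on $R$ and on $B$, generate $O(1/\eps^d)$ candidate halfspaces from a size-$O(1/\eps)$ primal sample as in~\cite{MP18b}, query each candidate to estimate $\Phi$, and boost the success probability by repetition. The error budgeting through the Lipschitz constant is fine.

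The gap is in the amplification step. You build the two data structures with only constant success probability and propose to ``repeat the entire construct-and-query pipeline $\Theta(\log(1/\delta))$ times and report the single best candidate found.'' Two problems. First, the selection rule is not sound: in a repetition where a data structure has failed, the reported estimates of $\Phi$ need not be underestimates, so the globally maximal \emph{estimated} $\Phi$ value may come from a faulty run and you cannot certify the halfspace it returns. To make repeat-and-take-best work you need either all data structures to be simultaneously correct (which forces per-run failure probability $O(\delta/\log(1/\delta))$, not a constant) or a trusted evaluator. Second, even setting correctness aside, repeating the structure build $\Theta(\log(1/\delta))$ times puts a $\log(1/\delta)$ factor in front of both the $|X|$ term and the $\frac{1}{\eps^d}\log^4\frac{1}{\eps}$ term; your claim that this is ``absorbed inside the existing $\log\frac{\log(1/\eps)}{\delta}$'' is false, since $\log(1/\delta)\cdot\log\log(1/\eps)$ is not $O(\log\log(1/\eps)+\log(1/\delta))$. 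The paper avoids both issues by building the structure \emph{once} with failure probability $\Theta(\delta)$, which already costs the $\log\frac{\log(1/\eps)}{\delta}$ factor exactly once, and then only re-drawing the cheap candidate set $\Theta(\log(1/\delta))$ times. Conditioned on the single structure being correct, all estimates across all repetitions are uniformly within $\eps/3$, so taking the best is safe, and the repeated scanning cost $O(\frac{1}{\eps^d}\log\frac{1}{\eps}\log\frac{1}{\delta})$ is dominated by the one-time construction. With that adjustment your proof goes through and matches the claimed bound.
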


\section{Conditional Lower Bounds}
\label{sec:hs-LB}

Our upper bounds only restrict $\Phi_X(h) = \phi(\mu_R(h), \mu_B(h))$ to be a Lipschitz function.   Next we show that an algorithm that can operate on this entire class of functions for $d = 2$ has a conditional lower bound of $O(m + \frac{1}{\eps^{2 - o(1)}})$, depending on  \TSUM~\cite{GO95}.

However, the first bound requires $\phi$ is also concave (unlike $\phi_{||}$ or $\phi_K$ which are convex).  
So we consider a different convex function
$\phi(\mu_R,\mu_B) = \mu_R - \mu_B$ on weighted points; 
solving for it, and again after flipping all signs of weights, corresponds with $\phi_{||}$ (which can be used to approximate $\phi_K$).  We show \MH for this $\phi$ is lower bounded by $O(m^{3/2 - o(1)})$ conditional on 
\APSP~\cite{VWW18} requiring $\Omega(n^{3-o(1)})$ time.    



\subsection{Lower Bounds by 3SUM}
Gajentaan and Overmars identified a large class of problems in computational geometry  called \TSUM hard~\cite{GO95}.  \TSUM can be reduced to each of these problems, so an improvement in any one of them would imply an improvement in \TSUM.  While there are some loose lower bounds for this problem, 3SUM is conjectured to \textbf{not} be solvable in $O(m^{2 - o(1)})$ time \cite{KPP14,VW18}.
\TSUM reduces to the following problem in $O(m \log m)$ time~\cite{GO95}.  

\begin{itemize}
\item \textbf{Problem} \PC: \emph{From a given set of $m$ halfspaces determine if there is a point at depth $k$ ($k$ halfspaces lie above this point) where $k \le m/2$.}
\end{itemize}


Through standard point-line duality, we transform this to its equivalent dual problem.  
The first step is to use the well known point-line dual transform to convert points to lines and lines to points. Line $y = ax + b$ can be mapped to the point $(a, -b)$ and the point $(u, v)$ can be mapped to the line $y = ux - v$.  A line in the primal that lies above a point in the dual then its corresponding point will lie below the point's corresponding line. We will define this dual problem below.

\begin{itemize}
\item \textbf{Problem} \LC: \emph{From a given set of $m$ rays, oriented upwards or downwards, determine if there is a line cutting through $k$ rays where $k \le m/2$.}
\end{itemize}


Define the piecewise linear function (for $k \leq m/2$) on points $R,B \in \R^2$:
\[
\Phi(h) 
= 
\phi_{LC}(\mu_R,\mu_B) 
=
|R| - \left|\mu_R |R| - \mu_B |B| - k\right| 
=
|R| - \left||h \cap R| - |h \cap B| - k\right|. 
\]

\begin{lemma}
\LC is reducible to \MH in $\R^2$ with $\phi_{LC}$  in $O(m\log m)$ time.
\label{lem:hstolc}
\end{lemma}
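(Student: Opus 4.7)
The plan is to translate the combinatorial condition ``line $\ell$ cuts $k$ rays'' directly into the algebraic quantity $|h \cap R| - |h \cap B|$ that $\phi_{LC}$ minimizes the distance to. I would set $R$ to be the endpoints of the upward rays and $B$ the endpoints of the downward rays (coloring in $O(m)$ time), and identify each non-vertical candidate line $\ell: y = ax + b$ with its upper halfspace $h = \{(x,y) : y \geq ax + b\}$. Vertical lines form a measure-zero case that can be removed by an infinitesimal generic rotation of coordinates that preserves the combinatorial cut pattern.

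The core identity is that an upward ray with endpoint $p$ is cut by $\ell$ iff $p \notin h$, while a downward ray with endpoint $p$ is cut iff $p \in h$. Hence the total number of rays cut by $\ell$ is
\[
c(\ell) \;=\; (|R| - |h \cap R|) + |h \cap B| \;=\; |R| - \bigl(|h \cap R| - |h \cap B|\bigr),
\]
so $c(\ell) = k$ exactly when $|h \cap R| - |h \cap B| = |R| - k$. Since $\phi_{LC}$ attains its maximum value $|R|$ precisely when $|h \cap R| - |h \cap B|$ equals the parameter sitting inside its absolute value, I want this parameter to be $|R|-k$. To match the formula literally (parameter equal to the \LC target $k$), I pad with dummy upward rays so that $|R| = 2k$: if $|R| < 2k$, I append $2k - |R|$ upward rays whose endpoints lie above every candidate line (so $p \in h$ for every $h$ in question, and the cut count is unaffected); if $|R| > 2k$, I append $|R| - 2k$ upward rays whose endpoints lie below every candidate line (so $p \notin h$ always, and each such ray is always cut, which shifts the \LC target by the same amount). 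Either padding takes $O(m\log m)$ time after a single sort to find safe coordinates, and after it $|R| - k = k$ as needed.

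After the reduction, ``$\exists$ a line cutting $k$ rays'' is equivalent to ``$\max_{h \in \c{H}_2} \phi_{LC}(\mu_R(h), \mu_B(h)) = |R|$'', with the bijection given by the identity above, so a single call to \MH on the colored point set settles \LC. The main obstacle to anticipate is the sign asymmetry between up- and down-rays (upward endpoints are cut when \emph{outside} $h$, downward endpoints when \emph{inside} $h$), which is exactly what produces the $|R|-k$ shift and forces the normalization step; once this shift is absorbed by padding, the correspondence between optima is clean and bidirectional, and the remaining verification is routine.
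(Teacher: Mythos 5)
Your approach is genuinely different from the paper's, and while the core counting identity is a good idea, the reduction as written has a gap.

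The paper does not pad and does not identify lines with upper halfplanes; it identifies each line with the halfplane \emph{below} it and uses a lower-envelope trick: every downward ray contributes \emph{two} points --- its endpoint to $B$ and its vertical projection onto the lower envelope of all endpoints to $R$. For any line $\ell$ lying on or above the envelope, all envelope points fall inside the lower halfplane $h_\ell$, so $|h_\ell\cap R|-|h_\ell\cap B|$ equals the number of cut rays directly, with no $|R|-k$ shift and no padding; and for any line strictly below the envelope the halfplane contains no points at all, so it can never be optimal. That device simultaneously eliminates the parameter shift you fight with padding \emph{and} neutralises a whole family of ``wrong'' halfplanes.

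The gap in your version is that \MH maximises over \emph{all} halfspaces, not only the upper halfplanes you identify with lines. Take a line $\ell$ among the original points and its \emph{lower} halfplane $h_{\mathrm{low}}$. With your high-up padding, $|h_{\mathrm{low}}\cap R'|=|h_{\mathrm{low}}\cap R_{\mathrm{orig}}|$, and a short calculation (using the fact that a downward ray is cut iff its endpoint is \emph{not} in $h_{\mathrm{low}}$) gives
$|h_{\mathrm{low}}\cap R'|-|h_{\mathrm{low}}\cap B| = c(\ell)-|B|$.
Hence $\Phi(h_{\mathrm{low}})$ attains $|R'|$ precisely when $c(\ell)=k+|B|$, not $k$. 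So a line cutting $k+|B|$ rays makes your \MH instance report ``yes'' even if no line cuts exactly $k$ rays. You need some mechanism --- this is exactly what the lower-envelope points accomplish --- to make the unintended orientation either impossible to be optimal or to decode to the same answer, and you have none.

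Two smaller issues. First, in the branch $|R|>2k$ appending rays only \emph{increases} $|R|$, so the sentence ``after it $|R|-k=k$ as needed'' is not literally true of the padded instance; what actually happens is that the LC target moves to $k'=|R|-k$ and one must use $\phi_{LC}$ with parameter $k'$ (which is fine since $k'\le m'/2$), but as written the argument conflates the old and new parameters. Second, ``endpoints above every candidate line'' should be justified: the set of combinatorially distinct halfplanes is finite (each realised by a line through at most two input points), so a single sort suffices to pick dominating coordinates, but you should say so rather than appeal to an undefined set of ``candidate lines.''
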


\begin{proof}
Construct the lower envelope of all endpoints of the rays; this takes $O(m \log m)$ time.  
Each upwards oriented ray is replaced with a point at its end point, and placed in $R$.  
Each downward oriented ray is replaced with two points: its endpoint generates a point in $B$, and where it intersects the lower envelope generates a point in $R$.  
See Figure \ref{fig:pt-cover}.  

Lines now correspond to halfplanes below those lines.  
Upward ray intersections require lines above them.  
Downward rays require lines between the corresponding $B$ and $R$ points: if a line is above both, they cancel in $\phi_{LC}$; if it is below both, it includes neither; if a line is between them, it only identifies the $R$ point.  But the lines below both are below the lower envelope, and cannot be the optimal halfspace.  Thus scanning the generated point set $R \cup B$, if it identifies a halfspace $h$ where $\Phi(h) > |R|$, only then is \LC satisfied.  
\end{proof}

\begin{figure}
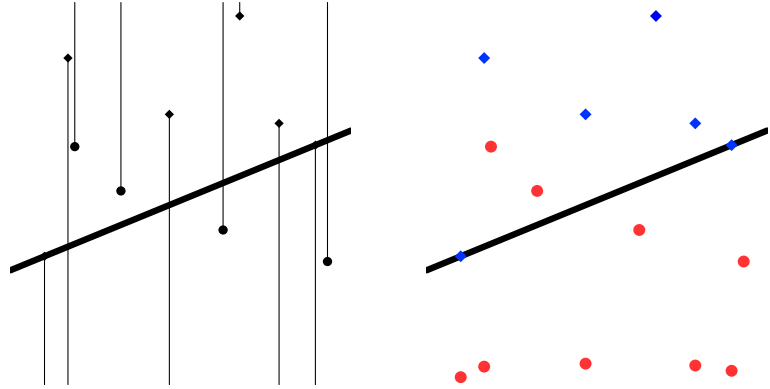

\centering
\includegraphics[width=.32\linewidth]{lower_bound1}
\hspace*{.3in}
\includegraphics[width=.32\linewidth]{lower_bound2}
\caption{On left converting the \PC problem into the dual makes it such that a line contains every ray that it intersects. On right the equivalent bichromatic discrepancy problem where red points have been placed on the lower envelope.}
\label{fig:pt-cover}
\end{figure}

We can also reduce the exact version to the approximate version. 
If we run an $\eps$-approximate \MH algorithm, and set $\eps = \frac{1}{2|R|}$ then the approximate range $h'$ found will be off by at most a count of $1/2$ from the optimal range, and hence must be the optimal solution.  

\begin{theorem}
	\label{thm:halfplane-LB}
In $\R^2$, \textsc{$\eps$-MaxHalfspace} for $\phi_{LC}$ takes $\Omega(m + 1/\eps^{2-o(1)})$ assuming the full input of size $m$ needs to be read, and \TSUM requires $\Omega(n^{2-o(1)})$ time.  
\end{theorem}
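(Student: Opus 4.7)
The plan is to obtain the two summands in the lower bound from separate sources. The $\Omega(m)$ term is immediate from the stated assumption that any correct algorithm must examine the input: since $\phi_{LC}$ depends on every red and blue point through $\mu_R$ and $\mu_B$, an algorithm that skipped a point could be forced to err by more than $\eps$ on a perturbed instance, so reading the input is mandatory. The substantive part of the proof is the $\Omega(1/\eps^{2-o(1)})$ term, which I will extract from \TSUM through the chain \TSUM\ $\to$ \PC\ $\to$ \LC\ $\to$ \MH\ already assembled in the preceding discussion.

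The first two links in the chain are given: Gajentaan and Overmars reduce \TSUM on $n$ inputs to a \PC instance of size $O(n)$ in $O(n \log n)$ time, and the point-line duality described right before Lemma \ref{lem:hstolc} turns this into a size-$O(n)$ \LC instance. Lemma \ref{lem:hstolc} then produces a \MH instance on $|R|+|B| = O(n)$ weighted points with the piecewise-linear $\phi_{LC}$. Since the reductions run in $O(n \log n) = n^{1+o(1)}$ time, the resulting \MH problem inherits an $\Omega(n^{2-o(1)})$ conditional hardness from \TSUM.

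To promote this hardness from the exact to the $\eps$-approximate setting, I would use the sharpening argument stated immediately before the theorem: because $\phi_{LC}$ takes values that are integer multiples of $1/|R|$, an additive error strictly smaller than $1/(2|R|)$ forces the approximate optimum to coincide with an optimal halfspace, so $\eps$-\MH with $\eps < 1/(2|R|)$ solves \MH exactly. Given a target $\eps$, I would start from a \TSUM instance of size $n = \lfloor 1/(c\eps)\rfloor$ for a suitable constant $c>0$; the reduction chain produces an \MH instance with $|R| \le m = O(n) = O(1/\eps)$, so the chosen $\eps$ indeed satisfies $\eps < 1/(2|R|)$. An algorithm for $\eps$-\MH running in time $T(m, 1/\eps)$ would therefore solve the \TSUM instance in the same time, and \TSUM hardness gives $T(O(1/\eps), 1/\eps) = \Omega(n^{2-o(1)}) = \Omega(1/\eps^{2-o(1)})$.

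The only genuinely delicate step is propagating the $o(1)$ cleanly: the reduction overhead is $n^{1+o(1)}$ rather than linear, and one must check that this sub-polynomial slack merges harmlessly into the $o(1)$ in the exponent. This is standard in \TSUM-based lower bounds but worth spelling out so that the stated exponent $2-o(1)$ matches the exponent in the \TSUM conjecture without loss. Combining the trivial $\Omega(m)$ bound with this construction then yields the claimed $\Omega(m + 1/\eps^{2-o(1)})$ conditional lower bound.
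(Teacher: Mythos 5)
Your proposal is correct and follows the same route as the paper: the chain \TSUM $\to$ \PC $\to$ \LC $\to$ \MH (via Lemma~\ref{lem:hstolc}), combined with the observation that because $\phi_{LC}$'s values fall on a grid of spacing $1/|R|$, calling an \eMH oracle with $\eps < 1/(2|R|)$ recovers the exact optimum, and then sizing the \TSUM instance so that $n = \Theta(1/\eps)$ and $|R| = O(n)$. Your explicit bookkeeping of the $n^{1+o(1)}$ reduction overhead and the constant $c$ in $n = \lfloor 1/(c\eps)\rfloor$ makes rigorous what the paper leaves implicit, but it is the same argument.
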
	

Since $\phi_{LC}$ is concave and $1$-Lipschitz, this implies any algorithm that works for all concave $\phi$ or all $1$-Lipschitz $\phi$ must also take at least this long.

\subsection{Lower Bound by All Pairs Shortest Path}
We next provide a new construction that directly applies to the $\phi_{||}$ function on weighted points in $\R^2$ via a reduction from All Pairs Shortest Path (\APSP).  We first show the \MWTC problem reduces to \APSP via another problem \NT.  

\begin{itemize}
\item \textbf{Problem} \APSP: \emph{Given an edge-weighted undirected graph with $n$ vertices, find the shortest path between every pair of vertices.  }
\item \textbf{Problem} \NT: \emph{Given an edge-weighted undirected graph with $n$ vertices, determine if there is a triangle with negative total edge weight.}
\item \textbf{Problem} \MWKC: \emph{Given an edge-weighted undirected graph with $n$ vertices,  find the $K$-clique with the maximum total edge weight.}
\end{itemize}

While the \APSP is more well-known, Willams and Williams~\cite{VWW18} showed it is equivalent to the \NT problem, which will be more useful. Both have well-known $O(n^3)$ algorithms and are conjectured to not be solvable in less than $O(n^{3-o(1)})$ time, and improving that bound on one would improve on the other.  
Moreover, Backurs \etal~\cite{Backurs16} used \MWKC as a hard problem, believed to require $O(n^{K-o(1)})$ time, for which to reduce to several other problems involving rectangles.  They note that for $K=3$, \MWTC is a special case of \NT.  That is, for a guessed max-weight $\omega$, one can subtract $\omega/3$ from each edge weight, then multiply all weights by $-1$.  If there exists a negative triangle with the new weights, there exists a triangle with weight $\omega$ in the original.  One can resolve the max weight with logarithmically number of steps of binary search.  Hence, \NT (and hence also \APSP) reduces to \MWTC.

%
%
%
For \MWTC it is equivalent to assume a $3$-partite graph $G = (V, E)$~\cite{Backurs16}; that is, the vertices $V$ are the disjoint union of three independent sets $A = \{a_1, a_2, \ldots, a_n\}$, $B= \{b_1, b_2, \ldots, b_n\}$, and $C = \{c_1, c_2, \ldots, c_n \}$. 
Each independent set will have exactly $n$ vertices and each vertex, for instance $a_i$, will have an edge with every vertex in $B$ and $C$. Denote the edge between $a_i$ and $b_j$ as $e(a_i, b_j)$ and the weight as $w(e(a_i, b_j))$.  We reduce to a dual of halfspace scanning in $\R^2$:

\begin{itemize}
\item \textbf{Problem} \MWP: \emph{Given $m$ weighted lines find a point which maximizes the sum of weights of all lines passing below that point (or intersecting that point).  }
\end{itemize}

Our reduction will rely on a planar geometric realization of any such graph $G$ with $m = O(n^2)$ lines, where the lines correspond to edges, and triple configurations of lines correspond with cliques.  
Given such an instance, if we can solve the \MH algorithm (in the dual as \MWP) in better than $O(m^{\frac{3}{2} - o(1)})$ time we can recover the solution to the \MWTC problem in better than $O(n^{3-o(1)})$ time.

\subparagraph{The double weighted line gadget.}
Our full construction will use a special gadget which will ensure the max weighted point will correspond with a vertex at a triple intersection of a planar arrangement of lines, with each line corresponding with an edge in $G$.  For this to hold we instantiate each edge as a \emph{double line} $d(e)$.  This consists of two parallel lines $\ell_u(e)$ and $\ell_l(e)$, separated vertically by a gap of a small positive value $\alpha \ll 1$, with $\ell_u(e)$ above $\ell_l(e)$.    In what follows we will refer to a double line as a single line; to be precise this will refer to a line at the midpoint between $\ell_l$ and $\ell_u$.  
Let $\bar{w} = 2 \max_{e \in E} |w(e)| + 1$ be a large enough value so for any $e \in E$ that $\bar w + w(e)$ is strictly positive, and any $|w(e)| < \bar w /2$.  
Then we denote the weight of the double line as $w(d(e)) = w(e) + \bar w$.  
This is transferred to the lines as $w(\ell_u(e)) = -w(d(e))$ and $w(\ell_l(e)) =  w(d(e))$.  

Now given a query point $x \in \R^2$ we say it is \emph{on} $d(e)$ if it lies between lines $\ell_l(e)$ and $\ell_{u}(e)$, and \emph{not on} otherwise.  This will allow us to control the effect of $e$ on query $x$ in a precise way.  

\begin{lemma}
Any point $x$ that lies on $d(e)$ will have weight contributed to it by $e$ of exactly $w(e) + \bar w$; otherwise that contribution will be $0$.  
\end{lemma}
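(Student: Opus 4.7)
The proof plan is a short case analysis on the vertical position of the query point $x$ relative to the two parallel lines $\ell_l(e)$ and $\ell_u(e)$ comprising the gadget $d(e)$. Since these two lines are parallel (both copies of the line carrying edge $e$, shifted vertically by $\alpha$), the plane decomposes into three horizontal regions: the half-plane strictly above $\ell_u(e)$; the open strip of width $\alpha$ between $\ell_l(e)$ and $\ell_u(e)$, which is by definition the set of points on $d(e)$; and the half-plane strictly below $\ell_l(e)$. The measure-zero boundary cases where $x$ sits exactly on one of the two lines count as intersections and can be absorbed into the same analysis, since the \MWP objective explicitly counts lines that intersect $x$.

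The heart of the argument is a single cancellation computation. In the upper region, both $\ell_u(e)$ and $\ell_l(e)$ pass below $x$, so both contribute to the \MWP objective, and their weights cancel exactly:
\[
w(\ell_u(e)) + w(\ell_l(e)) \;=\; -(w(e)+\bar w) + (w(e)+\bar w) \;=\; 0.
\]
In the lower region, neither line passes below $x$, so the contribution from $e$ is trivially $0$. Only in the middle strip does exactly one of the two lines, namely $\ell_l(e)$, pass below (or through) $x$ while $\ell_u(e)$ remains strictly above; the contribution is then $w(\ell_l(e)) = w(e)+\bar w$, which matches the stated claim.

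There is essentially no obstacle: the line weights $\pm(w(e)+\bar w)$ were engineered so that the contribution of edge $e$ to any query point is nonzero only when $x$ lies in the narrow strip defining $d(e)$, where it equals the prescribed value. The additive offset $\bar w$ plays no role in this particular lemma; its purpose is to guarantee that $w(e)+\bar w$ is strictly positive, which will be used later in the reduction to control which triples of double lines the optimal point selects rather than in the present local calculation.
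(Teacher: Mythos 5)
The paper states this lemma without an explicit proof (it is treated as an immediate consequence of the gadget design), so there is no paper proof to compare against; your proposal supplies exactly the argument the paper intends. Your three-region case analysis with the cancellation $-(w(e)+\bar w) + (w(e)+\bar w) = 0$ above the strip, the single surviving term $w(\ell_l(e)) = w(e)+\bar w$ inside the strip, and the zero contribution below is the right and essentially the only proof. One small point worth being explicit about: for the boundary cases to be consistent with the lemma as stated, ``on $d(e)$'' must be read as the half-open strip including $\ell_l(e)$ but excluding $\ell_u(e)$ (since a point exactly on $\ell_u(e)$ gets both lines counted and the contribution cancels to $0$); you gesture at this correctly by writing ``below (or through)'' for $\ell_l(e)$, but do not spell out the exclusion at $\ell_u(e)$. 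Also note that the ``above/below'' relation in the paper is made globally well-defined by the final small clockwise rotation of the whole construction (to handle the vertical black lines), a detail outside the scope of this local lemma but worth remembering when the lemma is applied.
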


\subparagraph{Reduction to triple intersections.}
Our construction will place double lines for each edge in the graph.  
The edges from $A$ to $B$ will be \emph{blue} lines; from $A$ to $C$ will be \emph{red} lines, and from $B$ to $C$ will be \emph{black} lines.  
All lines of the same color will be parallel; this will ensure that any query point $x \in \R^2$ can only be on one double line of each color.  
For easy of exposition, in our construction description (and illustration; see Figure \ref{fig:lower-construction}) the black lines will be vertical, which makes ambiguous the ``above'' relation; so the final step will be to rotate the entire construction clockwise by a small angle (less than $\pi/4$ radians).

The construction will now lay out the double lines so that every clique $\{a_i, b_j, c_k\}$ will be realized as a triple intersection of double lines $d(e(a_i,b_j))$, $d(e(a_i,c_k))$, and $d(e(b_j,c_k))$ and so there are no other types of triple intersections.  Such a triple intersection will have weight precisely $w(e(a_i,b_j)) + w(e(a_i,c_k)) + w(e(b_j,c_k)) + 3 \bar w > (3/2) \bar w $, and any other point (e.g., a double intersection) must have weight strictly less than $(3/2) \bar w$.    

\begin{lemma}
The maximum weight point must occur at a triple intersection of three double lines of different colors, and thus must correspond the max weight $3$-clique.  
\end{lemma}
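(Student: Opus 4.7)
The plan is to exploit the dominance of the additive term $\bar w$ in each double line's weight, together with the parallelism within each color class, to force the maximum-weight point to be a triple intersection that corresponds to a triangle in $G$.

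First, I would observe that all lines of one color are parallel by construction, so any point $x \in \R^2$ lies on at most one double line per color, and therefore on at most three double lines in total. By the preceding lemma, if $x$ lies on $k \in \{0,1,2,3\}$ double lines $d(e_1),\dots,d(e_k)$, then its total weight equals $k\bar w + \sum_{i=1}^k w(e_i)$.

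Next I would compare the maximum possible weight for each value of $k$. Using $|w(e)| < \bar w/2$, the weight at any point with $k \le 2$ is at most $2\bar w + 2\max_{e} |w(e)| = 3\bar w - 1$, while a triple intersection corresponding to a triple $(a_i,b_j,c_k)$ has weight $3\bar w + W_{ijk}$, where $W_{ijk} = w(e(a_i,b_j)) + w(e(a_i,c_k)) + w(e(b_j,c_k))$. Before applying the reduction I would add a uniform positive constant to every edge weight --- a standard preprocessing that preserves the argmax triangle and only rescales $\bar w$ --- chosen large enough to guarantee that every triangle of $G$ has non-negative total weight. Under this shift, the maximum triple-intersection weight is at least $3\bar w + W^* \ge 3\bar w > 3\bar w - 1$, strictly exceeding the weight at any $k \le 2$ point.

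Consequently, the globally maximum-weight point $x^*$ must lie on exactly three double lines, necessarily one of each color. By the construction (to be detailed next), any such triple intersection is located at a common crossing of $d(e(a_i,b_j))$, $d(e(a_i,c_k))$, $d(e(b_j,c_k))$, i.e., at a point indexed by a vertex triple $(a_i,b_j,c_k)$ which forms a $3$-clique in the complete tripartite graph $G$. Since the offset $3\bar w$ is constant across all triple intersections, $\arg\max_{x} \Phi(x) = \arg\max_{(i,j,k)} W_{ijk}$, so $x^*$ realises the maximum-weight 3-clique of $G$, as claimed.

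The main obstacle is not the weight comparison, which is essentially a matter of comparing constants, but the underlying geometric realizability: the three parallel families of double lines must be placed so that (i) every triangle $(a_i,b_j,c_k)$ manifests as an actual common intersection of the corresponding three double lines (up to the $\alpha$ gap), and (ii) no spurious "triple intersection" of differently indexed lines is created. I would defer these facts to the explicit coordinate construction illustrated in Figure~\ref{fig:lower-construction}, where the slopes and offsets of each color class are chosen so that the three line families decompose the plane into a triangular grid whose vertices are in bijection with triples $(i,j,k) \in [n]^3$.
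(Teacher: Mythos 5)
Your proof follows the same threshold strategy the paper's surrounding text uses (bound the weight at any point lying on at most two double lines; show that triple intersections must exceed it), but you compute the bound correctly and, in doing so, repair a real gap in the paper's argument. The paper asserts that any non-triple point has weight strictly less than $(3/2)\bar w$; as you compute, the correct cap at a point on $k\le 2$ double lines is $2\bar w + 2\max_e|w(e)| = 3\bar w - 1$, which is far above $(3/2)\bar w$. Since a triple intersection is only guaranteed weight $> (3/2)\bar w$, the paper's comparison does not close: the construction genuinely produces double intersections whose two edges share no vertex (e.g.\ the crossing of the blue line for $e(a_i,b_j)$ with the red line for $e(a_{i'},c_k)$, $i \ne i'$, which carries no black line), and such a point can out-weigh \emph{every} triple intersection whenever the maximum triangle weight $W^*$ is sufficiently negative while some pair of edge weights is large and positive --- one can build small explicit instances where this happens, so the lemma as stated is actually false for the unshifted construction. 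Your uniform positive shift of the edge weights (e.g.\ adding $\max_e |w(e)|$, which takes $O(n^2)$ time, preserves the argmax triangle, and only changes $\bar w$) forces $W^* \ge 0$ and hence pushes every triple intersection to weight at least $3\bar w > 3\bar w - 1$, which closes the comparison. The remaining steps of your argument --- at most one double line per color by parallelism, and deferring to the coordinate construction both that every $(i,j,k)$ is realized as a three-way crossing and that no spurious triples arise --- match the paper. So your proof is correct, and the weight shift is not a stylistic alternative but a necessary repair that the paper omits.
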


\begin{figure}
\centering
\includegraphics[width=.55\linewidth]{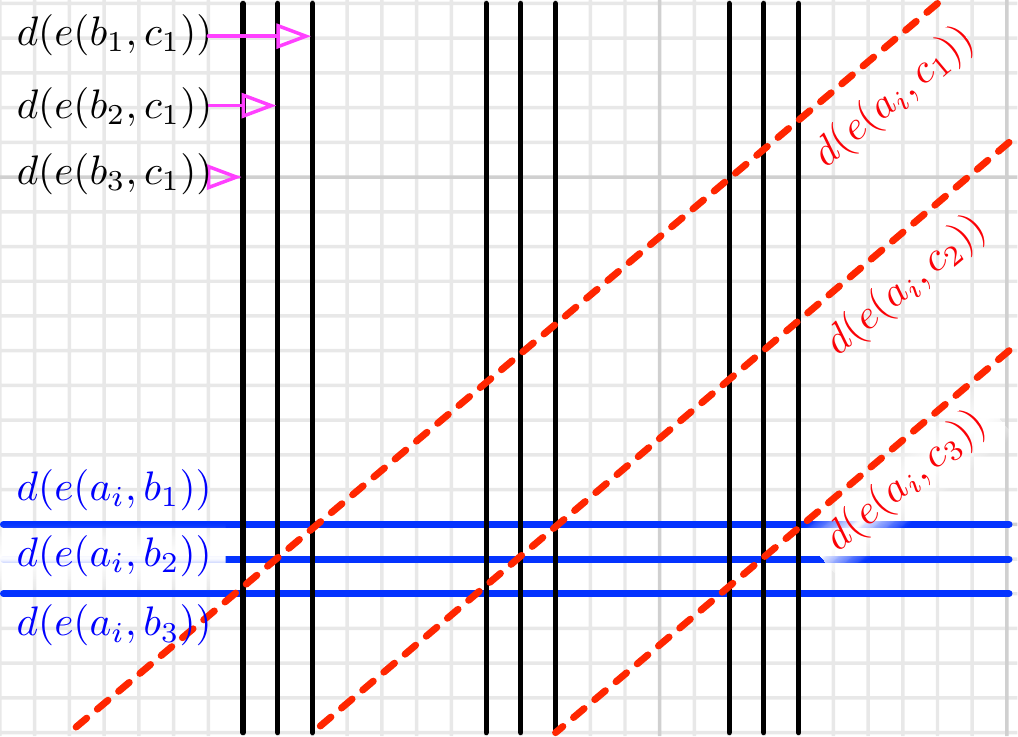}
\includegraphics[width=.44\linewidth]{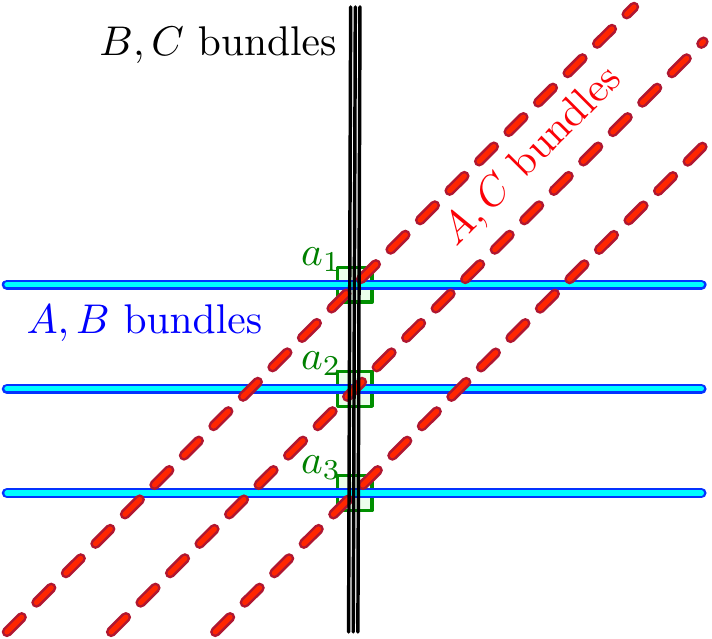}
\caption{\MWP construction with $n=3$.  Left: intersection of $a_i$ blue and red bundles with all black double lines.  Right: relation of the bundles, with 3 $a_i$ intersections marked.}  
\label{fig:lower-construction}
\end{figure}

\subparagraph{The full construction.}
The blue lines (using $A$ to $B$ edges) will all be horizontal.  Edge $e(a_i, b_j)$ will have $y$-coordinate $y_{i,j} = -j - i (5n^2)$ so that $y_{i,j+1} = y_{i,j}-1$ and so $y_{i+1,j} = y_{i,j} - 3n^2$.  Set $y_{1,1} = -5n^2-1$.  This bundles the blue lines associated with the same $a_i$ point.  Figure \ref{fig:lower-construction} shows a single bundle (left) and structure of all bundles (right).  

The red lines (using $A$ to $C$ edges) will be at a $45$ degree angle (a slope of $1$), similarly clustered by their $c_k$ values.  Double line $d(e(a_i,c_k))$ will have equation $\textbf{y} = \textbf{x} + o_{i,k}$.  
We define the offsets $o_{i,k} = -k(3n) - i(5n^2)$ so that $o_{1,1} = -3n-5n^2$; $o_{i,k+1} = o_{i,k} - 3n$ and $o_{i+1,k} = o_{i,k} - 5n^2$.   
And in particular, double lines $d(e(a_i,b_j))$ (at horizontal $y_{i,j}$) and $d(e(a_i,c_k))$ (at offset $o_{i,k})$ will intersect at $x$-value $x_{j,k} = y_{i,j} - o_{i,k}$.  

The black lines (using $B$ and $C$) will be vertical.  Edge $d(e(b_j,c_k))$ will have $x$ coordinate $x_{j,k}$, defined $x_{j,k} = y_{i,j} - o_{i,k} = -j + (3n)k$.  
Also, these $x_{j,k}$ values will be distinct for different values of $j,k$, but independent of the choice of $a_i$.  
Moreover, these are the same $x$-coordinates where the corresponding red and blue lines intersect.  Thus each black line $d(e(b_j,c_k))$ intersects the intersection of blue line $d(e(a_i,b_j))$ and $d(e(a_i,c_k))$ for each $a_i$.  

Finally, note that all black lines are in the $x$ range $[-3n^2, 0]$ we can argue that they do not cause any other triple intersections.  Because the each red bundle has offsets separated by more than $3n^2$ then a red bundle associated with $a_i$ cannot intersect a blue bundle associated $a_{i'}$ for $i \neq i'$ since the red lines have linear slope and the blue bundles are also separated by more than $3n^2$.  Thus the intended triple intersections (of $d(e(a_i,b_j))$, $d(e(a_i,c_k))$, and $d(e(b_j,c_k))$) are the only ones in this construction.  

In total there are $n^2$ blue, $n^2$ blue, and $n^2$ black double lines, thus $m = O(n^2)$.  
Hence, \MWTC on $n$ vertices reduces to \MH in time $O(n^2)$. 
Then reversing the dual mapping, we consider each dual line as two points, one in $R$ and one in $B$, and this corresponds with the \MH problem in $\R^2$ with $\phi_{||}$.  
Then since \APSP reduces to \MWTC, we obtain the following theorem.  

\begin{theorem}
In $\R^2$, \MH for $\phi_{||}$ on $m$ points requires $\Omega(m^{3/2 - o(1)})$ time assuming that \APSP on $n$ vertices requires $\Omega(n^{3 - o(1)})$ time.  
\label{parttohalfplane}
\end{theorem}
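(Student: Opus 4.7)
The plan is to chain together the reductions already laid out in the discussion preceding the theorem. First, \APSP is equivalent to \NT by Williams--Williams, \NT reduces to \MWTC by binary search on the target weight with only polylogarithmic overhead (Backurs \etal), and \MWTC reduces to \MWP by the double-line gadget construction described above. For any \MWTC instance on $n$ tripartite vertices this yields an \MWP instance on $m = 3n^2$ weighted lines (one per edge, in three parallel colors), and the two preceding lemmas certify that the maximum-weight point sits at a triple intersection of three distinctly colored double lines and that these triple intersections are in bijection with the $3$-cliques: the triple-intersection weight $w(e_{AB}) + w(e_{AC}) + w(e_{BC}) + 3\bar w$ strictly dominates the at-most $(3/2)\bar w$ weight achievable by intersecting fewer double lines or none.

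The one remaining step is to go from \MWP back to \MH under $\phi_{||}$ via the standard point--line duality. Each weighted line $\ell$ in the \MWP instance becomes a weighted point $p_\ell$ whose weight equals the signed weight of $\ell$, and ``query point above $\ell$'' in the primal becomes ``$p_\ell$ in the lower halfspace of the query line'' in the dual, so the \MWP objective turns into maximizing the signed weight of points in a halfspace. To cast this as $\phi_{||}$, I place every positive-weight dual point in $R$ (weighted by $|w|$) and every negative-weight dual point in $B$ (weighted by $|w|$); since the totals $|R|$ and $|B|$ are fixed and known, $\mu_R(h) - \mu_B(h)$ is an affine transformation of the signed \MWP weight and is therefore maximized at the same halfspace. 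The absolute value in $\phi_{||} = |\mu_R - \mu_B|$ is handled by running the \MH algorithm twice, once with the coloring above and once with $R$ and $B$ swapped, and returning the larger value. This still produces an \MH instance with $m = O(n^2)$ weighted points.

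Putting the chain together, a hypothetical algorithm for \MH in $\R^2$ under $\phi_{||}$ running in time $O(m^{3/2-\eta})$ for some $\eta > 0$ would yield, via this reduction, an algorithm for \MWTC in time $O((n^2)^{3/2-\eta}) = O(n^{3-2\eta})$, hence by the first two reductions an algorithm for \APSP in time $O(n^{3-2\eta+o(1)})$, contradicting the assumed \APSP lower bound. The one genuinely delicate point I anticipate is the dualization: I need to verify carefully that swapping $R$ and $B$ really does capture both signs of $|\mu_R - \mu_B|$ without introducing spurious optima, and that the fixed normalizations by $|R|$ and $|B|$ preserve the argmax. All other steps are either cited or already formalized in the lemmas above, so once this bookkeeping is honest the theorem follows.
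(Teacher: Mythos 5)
Your chain of reductions --- \APSP to \NT to \MWTC to \MWP to \MH --- is exactly the paper's argument, built on the same double-line gadget and the same $m=O(n^2)$ tripartite line construction, so this is essentially the same proof. One remark on the step you single out as ``genuinely delicate'': running \MH twice with $R$ and $B$ swapped is a no-op for $\phi_{||}$, since $|\mu_R - \mu_B|$ is symmetric in $R$ and $B$, so that pass would return the identical answer. Fortunately the step is also unnecessary: the double-line gadget ensures every primal query point $x$ accrues nonnegative total weight (each $d(e)$ contributes $w(e)+\bar w > 0$ when $x$ lies on it and $0$ otherwise), so after dualizing, the halfspaces corresponding to primal query points satisfy $\mu_R(h) - \mu_B(h) \ge 0$, while the complementary halfspaces give the negated value of equal magnitude. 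Thus $\max_h |\mu_R(h)-\mu_B(h)| = \max_h(\mu_R(h)-\mu_B(h))$ automatically, the argmax is preserved, and the bookkeeping closes without any extra passes.
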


\section{Conclusions and Discussion}
We have mostly closed the planar \eMH problem with an $\tilde O(m + 1/\eps^{d})$ algorithm in $\R^d$ and conditional (to \TSUM) $\Omega(m + 1/\eps^{2-o(1)})$ lower bound in $\R^2$.  However, the lower bound uses a piecewise-linear function $\phi_{LC}$, and while all known algorithmic improvements that depend on $\phi$ take advantage of this linear structure (e.g., for rectangles~\cite{APV06,MP18b}), the function $\phi_{LC}$, perhaps strangely, is concave.  
Also surprisingly we can prove another conditional lower bound for \MH using a convex (in fact, again linear) function $\phi$, but this one is smaller at $\Omega(m^{3/2 - o(1)})$ in $\R^2$, and because some point may have very small weight in the construction, does not directly apply to \eMH, which allows $\eps$ additive error.  Moreover, this reduces to \APSP, which does not appear to be in the same class as \TSUM~\cite{VW18}.

We leave several curious aspects to future work.  Is there a real fine-grained complexity difference between the $\phi$ variants in the problems?  That is, can we improve upper bounds for convex $\phi$, or improve conditional lower bounds in this case?  And can we condition the results on \TSUM in the convex $\phi$ case?  The convex $\phi$ lower bound construction relies on weighted points, can we obtain improved algorithmic runtime by only allowing $\{-1,+1\}$ weights?  
Moreover, are the polynomial terms $(1/\eps)^d$ in the algorithmic runtime correct in constant dimensions larger than $d=2$?  And can these results help resolve polynomial terms in the high-dimensional robust statistics settings?


\bibliography{bib-disc}
\normalsize

\end{document}